\newtheorem{theorem}{Theorem}[section]
\newtheorem{definition}{Definition}[section]
\newtheorem{corollary}[theorem]{Corollary}
\newtheorem{lemma}[theorem]{Lemma}
\newcommand{\R}{\mathbb{R}}
\newcommand{\E}{\mathbb{E}}
\newcommand\norm[1]{\lVert#1\rVert}
\newcommand{\bs}[1]{\boldsymbol{#1}}
\definecolor{dccolor}{HTML}{bd7f0d}
\definecolor{mjccolor}{HTML}{0F97C7}
    \newcommand{\asw}{\mathsf{asw}}
    \newcommand{\sw}{\mathsf{sw}}
    \newcommand{\rev}{\mathsf{rev}}
    \newtheorem*{remark}{Remark}
\title{Automated Design of Affine Maximizer Mechanisms in Dynamic Settings}
\author {
    Michael~ Curry\equalcontrib\textsuperscript{\rm 1, \rm 2, \rm 4},
    Vinzenz~ Thoma\equalcontrib\textsuperscript{\rm 3, \rm 4},
    Darshan~ Chakrabarti\textsuperscript{\rm 5},
    Stephen McAleer\textsuperscript{\rm 6},
    Christian~ Kroer\textsuperscript{\rm 5},
    Tuomas~ Sandholm\textsuperscript{\rm 6, \rm 7},
    Niao~He\textsuperscript{\rm 3},
    Sven~Seuken\textsuperscript{\rm 2, \rm 4}
}
\begin{document}

\maketitle

\begin{abstract}

Dynamic mechanism design is a challenging extension to ordinary mechanism design in which the mechanism designer must make a sequence of decisions over time in the face of possibly untruthful reports of participating agents.
Optimizing dynamic mechanisms for welfare is relatively well understood. However, there has been less work on optimizing for other goals (e.g. revenue), and without restrictive assumptions on valuations, it is remarkably challenging to characterize good mechanisms. Instead, we turn to \textit{automated mechanism design} to find mechanisms with good performance in specific problem instances.
We extend the class of affine maximizer mechanisms to MDPs where agents may untruthfully report their rewards. This extension results in a challenging bilevel optimization problem in which the upper problem involves choosing optimal mechanism parameters, and the lower problem involves solving the resulting MDP. 
Our approach can find truthful dynamic mechanisms that achieve strong performance on goals other than welfare, and can be applied to essentially any problem setting---without restrictions on valuations---for which RL can learn optimal policies.
\end{abstract}

\section{Introduction}

Dynamic mechanism design studies sequential decision-making problems, where decisions are based on the self-reported preferences of agents. A typical model is that the environment consists of a \textit{Markov decision process (MDP)}, and the mechanism controls the process given reported utilities by the agents. This has important applications, such as ad auctions or more generally online pricing (e.g. \citealt{Bergemann2019Dynamic}) but also problems of  decentralised decision making in RL (e.g. \citealt{Chang2020Decentralized}).

 Much work in dynamic mechanism design has focused on maximizing welfare~\cite{athey2013efficient,Nisan2007Algorithmic,lyu2022pessimism} subject to \textit{strategyproofness} (there should be no incentive for untruthful reports by agents).
 Some other work considers different goals, notably revenue~\cite{bergemann2010dynamic,kakade2013optimal,hajiaghayi2004adaptive,hajiaghayi2007automated} but needs to make restrictive assumptions about the space of agent types.
 Work on dynamic mechanism design, for general goals and for broad spaces of agent types, is much more limited. 

Dynamic mechanism design includes as a special case static mechanism design, and here the situation is similar.
To maximize welfare while ensuring strategyproofness, one can use the celebrated and well-understood Vickrey-Clarke-Groves (VCG) mechanism~\cite{Vickrey1961Counterspeculation,Clarke1971Multipart,Groves1973Incentives}.
For optimizing revenue, \citet{Myerson1981Optimal} completely settles the question under the restrictive assumption that agents' types are single-dimensional (essentially, that there is only one type of item up for sale);
 beyond this there has been little progress except for very specific problem instances~\cite{Yao2017Dominant}.
But for static settings,  \textit{automated mechanism design (AMD)}~\cite{Conitzer02Mechanism,Sandholm2003Automated,Curry2022Differentiable} has been used: this is a data-driven search through some class of mechanisms in order to find one that performs well while satisfying the constraints of strategyproofness and individual rationality. Automated mechanism design has in some cases found the highest-performing mechanisms known so far, and can recover optimal mechanisms in special cases where they are known~\cite{Duetting2019Optimal,ivanov2022optimal,Shen2019Automated}.

Given the successes of 
 automated mechanism design for static problems, it is surprising that its use for dynamic problems is relatively underexplored.

 \subsection{Our Contributions}
 Our present paper develops automated dynamic mechanism design techniques which can be applied to a very broad range of problems.
In particular, we consider mechanism design on general MDPs.
Our model works with many loss functions (including revenue but also other domain-specific loss functions), not just the easier goal of welfare, and we do not assume one-dimensional agent types.
Our assumptions are sufficiently general to capture essentially all of static multi-parameter mechanism design as a special case\footnote{If we restrict the MDP to have a single state, then we recover ordinary mechanism design, with each possible action corresponding to an outcome.}.

Optimal mechanism design over all possible mechanisms entails the very difficult problem of computing equilibria in imperfect information games, to understand whether or not any agent has any incentive to deviate from truthful reporting. Inspired by prior work for static mechanism design, we sidestep this issue by focusing on the class of affine maximizers (AMAs), which we define on MDPs.\footnote{The acronym AMA refers to ``affine maximizer auctions''. We consider affine maximizer \textit{mechanisms}, but we stick to the AMA acronym because it is widely used and to avoid confusion with AMM used to refer to ``automated market makers''.} These mechanisms are always strategyproof.

We justify this restriction in two ways.
First, as mentioned, our problem assumptions capture multi-parameter mechanism design---but finding optimal mechanisms in this setting has proven extremely difficult, so it makes sense to search within a more tractable class of mechanisms.
Second, our framing of the problem is broad enough that it captures problem instances where Roberts' theorem~\cite{Roberts1979Characterization} applies, which states that under general agent type spaces, \textit{only} affine maximizers can be strategyproof.

We frame the search for a high-performing \textbf{dynamic affine maximizer mechanism} as a \textbf{bilevel optimization problem}, where the outer problem consists of
choosing {weights} and (possibly state-dependent) {boosts}---the AMA parameters---to minimise a given loss function. 
The inner problem consists of learning to control an MDP to maximise affinely-transformed social welfare (the definition of an affine maximizer), given the weights, boosts, and agents' type reports (see equation \ref{eq:bilevel} below).
The derivatives of this inner problem do not exist at many points: however, we show that for the important case of revenue, the \textit{expected} loss over the distribution (with a continuous density) of agent valuations is differentiable.

We solve the inner problem via (possibly regularized) linear programming. We also propose a variety of ways to solve the outer problem: by grid search, by differentiating through the regularized LP, or by using zeroth-order methods to approximate the LP gradient. These latter two approaches explicitly or implicitly smooth the objective and avoid the problem of nonexistent derivatives. 
In experiments on several dynamic mechanism design settings, such as sequential auctions, task scheduling and navigating a gridworld, our approaches result in truthful mechanisms that outperform the VCG baseline.

\section{Related Work}
\subsection{Maximizing Welfare in Dynamic Mechanisms}

\citet{athey2013efficient} consider a dynamic mechanism design setting where agents update their beliefs over time, and where the goal is an efficient and budget-balanced outcome.
Parkes~\cite{ParkesChapterNisan2007algorithmic} describes a dynamic mechanism design setting where the focus is on agents who may arrive and depart at different periods. Both of these approaches simply assume an optimal policy is available.
More recently, \citet{lyu2022pessimism} presents a model for learning this policy in an offline RL setting; another work focuses on the online case~\cite{Lyu2022Learning}. \citet{Chang2020Decentralized} consider a dynamic VCG mechanism for decentralised RL where agents bid on the MDP transitions.  \citet{bergemann2010dynamic} presents a VCG-like ``dynamic pivot mechanism''.%
There are different modeling choices and goals in each of these approaches, but the common theme is that the allocation problem involves making decisions on some MDP after observing agent reports. All of these papers consider a dynamic analogue to the VCG mechanism, i.e. the mechanism designer acts according to the welfare-optimal policy and charges agents their externalities to ensure incentive compatibility.
This is in contrast to our work, where we are concerned with goals beyond welfare. 

\subsection{Dynamic Mechanism Design for Goals Other Than Welfare}

There is some existing work in this direction as well, with a particular focus on revenue. \citet{pavan2014dynamic} and \citet{kakade2013optimal} both consider cases where the private information about the value of the item is 1-dimensional. This allows for a Myerson-style analysis of the actual profit-maximizing mechanism. Other work considers optimal selling of items to agents who arrive and depart over time from the perspective of optimal stopping, but again considers single-parameter item valuations~\cite{hajiaghayi2004adaptive,hajiaghayi2007automated,Kleinberg2005MultipleChoice}. \citet{Bergemann2019Dynamic} surveys other results that make similar assumptions for tractability.
\citet{Pai2008Optimal} considers a similar setting and finds the optimal Bayes-Nash incentive compatible mechanism.
Still other work considers settings where the mechanism designer may update the mechanism online over time (e.g. by changing reserve prices)~\cite{Shen2017Reinforcement} and where bidders may even strategically attempt to manipulate the learning process~\cite{Amin2014Repeated}. 
Overall, these approaches are restrictive in terms of their assumed value model and frequently focus on analytic results, while our computational approach allows more general values and loss functions.

\subsection{Preference Elicitation from Multiple Agents and Multistage Mechanisms}

Another line of work considers iterative preference elicitation~\cite{Conen01:Preference,Sandholm06:Preference}: based on past agent reports, the mechanism can query what preference information it needs most. Existing approaches make use of an analogy to query learning~\cite{Zinkevich03:Polynomial,Blum2004Preference,Lahaie2004Applying}, or leverage machine learning~\cite{Soumalias2023Machine,Weissteiner2023Bayesian,Brero2019Machine}. 
In this context, \citet{Sandholm2007Automated} use automated mechanism design to first find good mechanisms (for general type spaces) and then convert them into multistage mechanisms.
While these approaches are typically focused on making a single final decision, but eliciting agent's preferences over multiple stages, our automated dynamic mechanism design approach is concerned with making a sequence of decisions over time, while eliciting preferences once.

\subsection{Static Automated Mechanism Design}

Due to the difficulty of analytically finding optimal mechanisms, a number of works have instead attempted to treat static mechanism design as a computational optimization problem, starting with~\citet{Conitzer02Mechanism} and~\citet{Sandholm2003Automated}, and learn good mechanisms from samples, starting with~\citet{Likhodedov2004Methods}. 
One line of work makes use of static affine maximizers and achieves good performance in multi-item multi-bidder auctions~\cite{Sandholm2015Automated, Curry2022Differentiable, Duan2023Scalable}.
There is also a line of learning theory research on choosing the AMA parameters given samples from the valuation distribution~\cite{Balcan2016Sample,Balcan2018General,Balcan2021How}.
Another direction is to start with a potentially non-strategyproof mechanism and iteratively modify it to improve strategyproofness. This is known as \textit{incremental mechanism design}~\cite{Conitzer07Incremental}. One line of work in this direction makes use of rich function approximators to learn mechanisms. \citet{Duetting2019Optimal} presents one influential direction, which uses neural networks to optimize revenue and a penalized loss to approximately enforce strategyproofness, with many followups~\cite{Curry2021Learning,Curry2020Certifying,ivanov2022optimal,Rahme2021Auction,Rahme2020Permutation}. \citet{Shen2019Automated} presents an alternative approach for single-bidder settings which can cope with broader classes of utility functions.
As mentioned above, the success of these techniques in static settings is our motivation to develop such approaches for the dynamic setting.

\section{Preliminaries}

Below, we describe our mechanism design problem. The nature of the problem and our mechanism design desiderata motivate our choice to restrict attention to affine maximizer mechanisms. We then describe in more detail how these mechanisms work in a dynamic setting.

\subsection{Formal Model of Problem Setting}

\paragraph{Environment and policy/allocation rule} Consider some MDP $\mathcal{M}=(S,A,P)$ (with reward not yet specified), where $S$ is a set of states, $A$ is a set of actions, and $P$ is a transition function.
There are $n$ agents, each with their own reward function $r_i: S\times A \rightarrow \R$ drawn from a distribution with density $f_i$. We emphasize that these agents are not themselves taking actions in the MDP---this is done by the mechanism. Their only choice will be which rewards to report to the mechanism.
We assume the mechanism designer wants to minimize some loss function (which will often be the negative of some objective to be maximized) $\mathcal{L}$ in expectation over the $f_i$, which in general depends on the agents' rewards and the chosen policy $\pi$ on the MDP. The latter corresponds to the allocation rule in traditional mechanism design.

\subsection{Mechanism Design Desiderata}

\paragraph{Incentive compatibility and payments} In order to achieve its goal, the mechanism will need access to the true $r_i(s,a)$.
However, in general, we should expect the agents to misreport their reward function if they think it will benefit them.
Thus, we will allow for some side payments to be made based on the MDP's solution and the agents' reports, in order to ensure that there is no incentive to misreport, that is, making the mechanism \textit{incentive compatible (IC)} or \textit{strategyproof}. (Such payments only exist for some choices of $\pi$.)
We assume agent utility is quasilinear, that is, positive payments just correspond to negative reward.

\paragraph{Individual rationality} We also want to guarantee individual rationality (IR), meaning that agents should not be charged so much that they receive negative utility and would be better off not participating in the mechanism. 

\begin{remark}
The mechanism designer's goal may also relate to the payments. For example, a canonical mechanism design goal is to maximize revenue---in our setting, choose a policy $\pi^*$ such that payments can be made as high as possible while still ensuring IC and IR.
\end{remark}

\subsection{Background on Affine Maximizers}

As motivated above our goal is to choose some $\pi$ such that IC/IR side payments can be constructed, while also performing as well as possible on the mechanism designer's higher-level objective.

Unlike prior work (e.g.,~\citealt{kakade2013optimal,bergemann2010dynamic}), we do not make any assumptions about the structure of the reward functions. Our setting is therefore general enough to incorporate many hard problems such as optimal multi-item mechanism design, and is thus at least as hard as those. Therefore hoping to get the truly best-performing $\pi$, even only in an infinite-sample/asymptotic sense, is too much. Thus it is appropriate to restrict attention to a more tractable class of mechanisms.

Also, our problem setting is general enough to include situations where a result known as Roberts' theorem applies~\cite{Roberts1979Characterization}. It states that under certain conditions (arbitrary rewards, at least 3 outcomes), the only allocation rules that can possibly have IC payments take the form of affine maximizers.

Below, we give the standard definition of affine maximizer mechanisms in terms of the allocation and payment rules, modified for our dynamic problem setting.

\begin{definition} [Affine maximizers]
    Given so-called \textit{weights} $w\in \R_{+}^n$ and boosts $b \in \R^{|S|\times |A|}$, a dynamic affine maximiser mechanism (AMA) takes reported reward functions $\bs{r}\in \R^{|S|\times |A|\times n}$ and returns a policy $\pi^*(w,b,\bs{r})$ on the MDP that maximizes the affine social welfare $\asw^{(\pi)}(w,b,\bs{r})$ where
  $$
    \asw^{(\pi)}(w,b,\bs{r}) = \E_{\pi}\left[\sum_{t=0}^T\left(\sum_{i=1}^n  w_i r_i(s_t,a_t)\right)+ b(s_t,a_t)\right]
    $$

    Defining $\asw(w,b,\bs{r}) =\asw^{(\pi^*(w,b,\bs{r}))}(w,b,\bs{r})$ as the maximum affine social welfare for reports $\bs{r}$, the resulting payment is then

\begin{dmath*}
\small
    \bs{p}_i(w,b,\bs{r})=  \frac{1}{w_i}\left(\asw^{(-i)}(w,b,\bs{r})-\left(\E_{\pi^*(w,b,\bs{r})}\left[\sum_{t=0}^T\left(\sum_{j\neq i}  w_j r_j(s_t,a_t)\right)+ b(s_t,a_t)\right] \right) \right) \label{eq:amapayment}
\end{dmath*}
   where 
  $\asw^{(-i)}(w,b,\bs{r})$ defined as $\max_{\pi} \E_{\pi}\left[\sum_{t=0}^T\left(\sum_{j\neq i}  w_j r_j(s_t,a_t)\right)+ b(s_t,a_t)\right]$
  is the maximum affine social welfare, when disregarding $i$.

\end{definition}
No matter the choice of weights and boosts, every resulting AMA is strategyproof ex-post---after learning the rewards of the other agents, reporting truthfully is a dominant strategy---and IR in expectation over the MDP trajectories. The proof can be found in Appendix \ref{app:amaisic}.

\section{Dynamic Mechanism Design as Bilevel Optimisation}
The problem of searching for a performant mechanism within the class of AMAs can be naturally formulated as stochastic bilevel optimization as follows\footnote{We constrain the policy to be in the \textit{set} of best responses, because there is a null set of possible $\bs{r}$ where $\pi^*$ is not unique. However, because it is a null set, the choice of $\pi^*$ does not influence the expected value in the outer problem so that it is well-defined.}:
\begin{dmath}
     \min_{w,b}  \mathbb{E}_{\bs{r}\sim f}[\mathcal{L}\left(\pi^*,w,b\right)] \text { s.t. } 
  \pi^* \in \arg \max _\pi \mathbb{E}_{s_t, a_t \sim \pi}\left[\sum_{t=0}^T  \left(\sum_{i=1}^n w_i r_i(s_t, a_t)\right)+b(s_t, a_t)\right] \forall \bs{r}
  \label{eq:bilevel}
\end{dmath}
Given the bilevel structure, we can think of the problem as a game between a leader and a follower.
\begin{itemize}
    \item The leader knows only the joint distribution $f=\prod_i f_i$ from which the set of reward functions are drawn, and chooses weights $w_i$ and boosts $b(s,a)$.
    \item For any draw of $r_i$ and the weights and boosts, the follower
    acts optimally (``best responds'') in the MDP according to the AMA objective.
\end{itemize}
To clarify (and to contrast with many models of mechanism design where the agents making reports are treated as followers): the leader and follower are only ``notional''.  In reality, there is only one mechanism designer. We nevertheless speak in terms of a ``leader'' and ``follower'' because in order for strategyproofness to be attained, some component of the system must successfully maximize affine social welfare, which is a goal distinct from the true goal of the mechanism designer.

In general, the problem above---bilevel optimization, with stochasticity in the leader's objective---is quite difficult. Indeed, the case of revenue-maximisation in one-round auctions with multiple goods, which is a very special case of our much more general problem, remains essentially unsolved beyond a few very simple special cases~\cite{Yao2017Dominant}.
Therefore finding a globally optimal solution to the above problem is too much to hope for, but we show that derivatives exist for important expected loss functions, which enables us to use gradient-based optimization techniques to find local optima.

We then consider three complementary methods for optimizing the AMA parameters: random grid search, zeroth-order methods to approximate the derivatives, and differentiation through a smoothed LP.

\subsection{Existence of Derivatives}
\label{sec:gradients}
So far, we have not made any assumptions about the loss function. A very natural desideratum would be that $\mathcal{L}(\pi^*(w,b,\bs{r}),w,b)$ is differentiable, so that we can perform stochastic gradient descent. However, in general this is not true. This is due to the fact that $\pi^*(w,b,\bs{r})$ is in general not even continuous in $w,b$---since the optimal policy on an MDP is always going to be deterministic---and therefore neither is $\mathcal{L}$. However, if $\mathcal{L}$ has a certain shape, which is the case for the loss functions we consider in this work, we can prove that the relaxed condition that $E_{\bs{r}}[\mathcal{L}(\pi^*(w,b,\bs{r}),w,b)]$ is differentiable.

\begin{theorem}
\label{thm:ldiff}
    Let $\mathcal{L}$ be a loss function for the problem in Equation \eqref{eq:bilevel} and assume it can be decomposed as follows \[\mathcal{L}(\pi^*,w,b)=\asw(w,b,\bs{r})+\sum_{k=1}^K a_k g_k(\pi^*(w,b,\bs{r}),\bs{r})\] where it holds for all $k$ that $g_k = \mathcal{O}(\norm{\bs{r}}_{\infty})$. Assume further that the support of $\bs{r}$ is compact or $f$ decays sufficicently quickly  such that $\E[\norm{\bs{r}}_{\infty}]$ exists. Then $E_{\bs{r}}[\mathcal{L}(\pi^*(w,b,\bs{r}),w,b)]$ is differentiable almost everywhere.
\end{theorem}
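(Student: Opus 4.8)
The plan is to write $F(w,b) := \E_{\bs{r}}[\mathcal{L}(\pi^*(w,b,\bs{r}),w,b)]$ as $F(w,b) = \E_{\bs{r}}[\asw(w,b,\bs{r})] + \sum_{k=1}^K a_k\, F_k(w,b)$ with $F_k(w,b) := \E_{\bs{r}}[g_k(\pi^*(w,b,\bs{r}),\bs{r})]$, and to establish that $F$ is locally Lipschitz in $(w,b)$; differentiability almost everywhere then follows immediately from Rademacher's theorem. The structural fact underlying everything is that the finite-horizon MDP has only finitely many deterministic policies $\pi_1,\dots,\pi_M$, and for each the affine social welfare $\asw^{(\pi_m)}(w,b,\bs{r})=\sum_{s,a} d^{\pi_m}(s,a)\bigl(\sum_i w_i r_i(s,a)+b(s,a)\bigr)$ is, for fixed $\bs{r}$, an affine function of $(w,b)$ (here $d^{\pi_m}$ is the state--action occupancy, depending only on $P$ and $\pi_m$). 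Hence $\asw(\cdot,\cdot,\bs{r})=\max_m \asw^{(\pi_m)}(\cdot,\cdot,\bs{r})$ is convex and piecewise affine, and $\pi^*(w,b,\bs{r})$ is the almost-everywhere unique maximizer, which induces a polyhedral subdivision of $(w,b)$-space on which $\pi^*$---and therefore each $g_k(\pi^*,\bs{r})$---is constant.

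For the $\asw$ term I would simply observe that a pointwise maximum of affine functions is convex, so for each $\bs{r}$ the map $(w,b)\mapsto\asw(w,b,\bs{r})$ is convex; taking expectations preserves convexity, and the integral is finite because $\asw=\mathcal{O}(\norm{\bs{r}}_{\infty})$ together with the assumption that $\E[\norm{\bs{r}}_{\infty}]$ exists. A finite convex function on an open set is locally Lipschitz, so this term is differentiable almost everywhere with no further work.

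The real content is the Lipschitz bound for each $F_k$. Fixing a compact neighborhood $U$ of a point with $w$ bounded away from $0$ and comparing $(w,b),(w',b')\in U$, the integrand of $F_k(w,b)-F_k(w',b')$ vanishes off the ``policy-switching'' set $D=\{\bs{r}:\pi^*(w,b,\bs{r})\neq\pi^*(w',b',\bs{r})\}$, and on $D$ it is bounded by $2\bigl(\max_k C_k\bigr)\norm{\bs{r}}_{\infty}$ using $g_k=\mathcal{O}(\norm{\bs{r}}_{\infty})$. The key geometric step is that $D$ lies in the union, over pairs $m\neq m'$, of slabs $\{\bs{r}: 0\le \ell^{w,b}_{m,m'}(\bs{r})\le L(\bs{r})\,\norm{(w,b)-(w',b')}\}$, where $\ell^{w,b}_{m,m'}=\asw^{(\pi_m)}-\asw^{(\pi_{m'})}$ is affine in $\bs{r}$ and the comparison functions are Lipschitz in $(w,b)$ with constant $L(\bs{r})=\mathcal{O}(1+\norm{\bs{r}}_{\infty})$. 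It then remains to show $\int_D \norm{\bs{r}}_{\infty} f(\bs{r})\,d\bs{r}=\mathcal{O}(\norm{(w,b)-(w',b')})$, i.e. that the $\norm{\bs{r}}_{\infty} f$-weighted volume swept by the decision boundary is linear in the displacement; summing over the finitely many pairs yields $|F_k(w,b)-F_k(w',b')|\le \mathrm{Lip}\cdot\norm{(w,b)-(w',b')}$.

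This last estimate is where I expect the main difficulty, and where the hypotheses are genuinely used. Under $b$-perturbations the boundary translates rigidly, so the swept region is a true slab of $\bs{r}$-independent thickness proportional to $\norm{b-b'}_{\infty}$, and the bound follows from integrability of $\norm{\bs{r}}_{\infty} f$. Under $w$-perturbations the boundary instead rotates, so the local thickness of the swept slab grows like $\norm{\bs{r}}_{\infty}$; controlling the weighted volume then forces one to use the decay of $f$ (or compact support), decomposing $\R^{|S||A|n}$ into dyadic shells in $\norm{\bs{r}}_{\infty}$ and summing the contributions---this is exactly the role of the assumption that the support is compact or $f$ decays sufficiently fast. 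Once the Lipschitz bounds for $\E[\asw]$ and every $F_k$ are established, $F$ is a finite linear combination of locally Lipschitz functions, hence locally Lipschitz, and Rademacher's theorem gives differentiability almost everywhere. (An alternative to Rademacher for the $F_k$ terms is to note, via Fubini, that for almost every $(w,b)$ the set of $\bs{r}$ at which $\pi^*(\cdot,\cdot,\bs{r})$ is discontinuous has measure zero, then differentiate under the integral sign through a Leibniz/transport argument; but the Lipschitz route is cleaner and avoids justifying the resulting flux terms.)
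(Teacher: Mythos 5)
Your proposal is correct and follows essentially the same route as the paper's proof: reduce everything to local Lipschitz continuity plus Rademacher's theorem, treat the $\asw$ term as a maximum of finitely many functions affine in $(w,b)$ (you argue via convexity, the paper via an explicit optimality sandwich), and bound the expected jump in each $g_k$ by the $\norm{\bs{r}}_{\infty}f$-weighted volume of slabs between hyperplanes in reward space on which two policies tie. If anything, you are more careful than the paper in one spot: the paper proves the slab estimate only for perturbations of $b$, where the hyperplanes translate rigidly, and dismisses the $w$-case as working ``similar,'' whereas you correctly flag that under $w$-perturbations the boundary rotates, the slab thickness grows like $\norm{\bs{r}}_{\infty}$, and the compact-support/decay hypothesis is precisely what rescues the estimate there.
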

\begin{proof}[Proof Idea]
 The full proof is in Appendix \ref{app:proofdiff}. It consists of two parts. First we show that $\asw$ is Lipschitz continuous in $w,b$. This essentially follows because it is the maximum over a set of functions which are linear in $w,b$. For $g_k$ we argue that for any $w, b$, the set of rewards for which two alternative policies give the same affine social welfare, lie on a hyperplane. Changing the weights and boosts may change the optimal policy, and therefore may result in differing social welfare for some subset of the rewards. We would like to bound the probablility mass of the set of rewards where this can happen. Indeed, the rewards where the optimal policy may change, all lie in between the aforementioned hyperplanes. The distance between the hyperplanes depends on the change in $w$ and $b$, so the mass can be bounded---assuming the probability density decays sufficiently fast---which results in a local Lipschitz constant. Differentiability follows from Rademacher's theorem and holds equivalently for the sum of these terms because of the linearity of the expectation and derivative.
 \end{proof} %
 In Section \ref{sec:experiments} we will study two loss functions in particular: revenue and makespan. 
The revenue of a dynamic AMA mechanism is given by 
    \begin{dmath*}
        \mathsf{rev}(w,b,\bs{r})=\sum_{i=1}^n \bs{p}_i(w, b,\bs{r})
        =-\sum_{i=1}^n(\frac{1}{w_i}\asw(w,b,\bs{r}))+\sw^{(\pi^*(w,b,\bs{r}))}(\bs{r}) + \sum_{i=1}^n \frac{1}{w_i} \asw^{(-i)}(w,b,\bs{r})
    \end{dmath*}
Here $\sw$, the non-transformed social welfare, is just $\asw$ when weights are all 1 and boosts are all 0, i.e. $\sw^{(\pi)}=\E_{\pi}\left[\sum_{t=0}^T\sum_{i=1}^n r_i(s_t,a_t)\right]$.

It is easy to verify that revenue fulfills all assumptions of Theorem \ref{thm:ldiff}.

\begin{lemma}
    The expected revenue is differentiable almost everywhere.
\end{lemma}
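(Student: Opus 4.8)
The plan is to verify that the revenue formula satisfies the hypotheses of Theorem \ref{thm:ldiff} and then invoke that theorem directly. Writing out the revenue,
\[
\rev(w,b,\bs{r}) = -\sum_{i=1}^n \frac{1}{w_i}\asw(w,b,\bs{r}) + \sw^{(\pi^*(w,b,\bs{r}))}(\bs{r}) + \sum_{i=1}^n \frac{1}{w_i}\asw^{(-i)}(w,b,\bs{r}),
\]
I would sort its three kinds of terms according to which half of the proof of Theorem \ref{thm:ldiff} they fall under, and then account for the prefactors.

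First I would observe that both $\asw(w,b,\bs{r})$ and each $\asw^{(-i)}(w,b,\bs{r})$ are, for each fixed $\bs{r}$, maxima over policies of quantities that are \emph{affine} in $(w,b)$. This is exactly the situation handled in the first part of the proof: a pointwise maximum of affine functions is convex and Lipschitz in $(w,b)$, with a local Lipschitz constant controlled by $\norm{\bs{r}}_\infty$ together with the boosts. The term $\asw^{(-i)}$ is structurally identical, being the affine social welfare of the $(n-1)$-agent sub-problem that disregards agent $i$, so the identical argument applies verbatim.

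Next I would treat $\sw^{(\pi^*(w,b,\bs{r}))}(\bs{r})$ as a term of the form $g_k(\pi^*(w,b,\bs{r}),\bs{r})$: it depends on $(w,b)$ only through the induced optimal policy $\pi^*$, and since it is bounded in absolute value by $n(T+1)\norm{\bs{r}}_\infty$ it satisfies $\sw^{(\pi^*)} = \mathcal{O}(\norm{\bs{r}}_\infty)$, so the growth hypothesis holds. Hence the second part of the proof applies: although $\pi^*$ jumps discontinuously in $(w,b)$, the set of reward profiles at which the optimal policy is ambiguous lies between hyperplanes whose separation scales linearly with the perturbation of $(w,b)$, so $\E_{\bs{r}}[\sw^{(\pi^*)}]$ is locally Lipschitz; the growth bound, with $\E[\norm{\bs{r}}_\infty]$ assumed finite, also secures integrability.

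The one point that does not match the template of Theorem \ref{thm:ldiff} literally is the appearance of the $w$-dependent prefactors $1/w_i$ in front of the affine-social-welfare terms, in place of the constant coefficient on $\asw$ in the statement; I expect this to be the only genuine obstacle, and it is mild. Since the weights lie in $\R_{++}^n$, each $1/w_i$ is $C^\infty$ in $w$ on the relevant domain, and local Lipschitzness is preserved under multiplication by a smooth factor and under finite sums. Thus $\E_{\bs{r}}[\rev(w,b,\bs{r})]$ is a finite combination of locally Lipschitz functions with smooth coefficients, hence itself locally Lipschitz in $(w,b)$, and differentiability almost everywhere follows from Rademacher's theorem exactly as in the proof of Theorem \ref{thm:ldiff}.
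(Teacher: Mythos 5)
Your proof is correct and follows essentially the same route as the paper: decompose revenue into $\asw$, $\asw^{(-i)}$, and $\sw^{(\pi^*)}$, handle the first two via the Lipschitz-maximum-of-affine-functions half of Theorem \ref{thm:ldiff}'s proof, and treat the last as a $g_k$ term bounded by $\mathcal{O}(\norm{\bs{r}}_{\infty})$. You are in fact slightly more careful than the paper, which silently treats $\asw$ as appearing ``directly as stated in the assumptions'' and never addresses the $w$-dependent prefactors $1/w_i$; your observation that these fall outside the literal template of the theorem, and your fix via smoothness of $w \mapsto 1/w_i$ on the positive orthant together with closure of local Lipschitzness under products and finite sums, is valid and plugs a small gap in the paper's own exposition.
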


\begin{proof}
    Revenue is a essentially a sum of the three terms $\asw,\sw$ and $\asw^{(-i)}$. The first $\asw$ is directly as stated in the assumptions of Theorem \ref{thm:ldiff}. For $\asw^{(-i)}$ the proof extends canonically as it equivalently is the maximal affine social welfare, albeit for a smaller set of agents. Last but not least $\sw^{(\pi^*(w,b,\bs{r}))}(\bs{r})$ is a function of only $(\pi^*(w,b,\bs{r}))$ and $\bs{r}$, for which it holds that
    \begin{align*}
    \sw^{(\pi^*(w,b,\bs{r}))}(\bs{r})&=\E_{\pi}\left[\sum_{t=0}^T\left(\sum_{i=1}^n   r_i(s_t,a_t)\right) \right]\\
    &\leq Tn \norm{\bs{r}}_{\infty}
    \end{align*}
    thereby concluding the proof.
\end{proof}

Once we have introduced the task scheduling problem we will analogously show that makespan also satisfies the assumptions of Theorem \ref{thm:ldiff}.

\subsection{Linear Programming Formulation}
\label{sec:lpformulation}
We now describe the linear-programming formulation for an MDP, which can be either infinite horizon or episodic with states partially ordered by time (that is, the time step is encoded in the state)~\cite{Altman1999Constrained}. In particular, we suppose the follower solves the following LP to find the optimal state-action occupancy measure $\nu_{\pi^*(w,b,\bs{r})}$,\footnote{This corresponds to the optimal policy by $\pi^*(a | s) = \frac{\nu_{\pi^*(w,b,\bs{r})}(s, a)}{\sum_a \nu_{\pi^*(w,b,\bs{r})}(s,a)}$.} which determines the revenue (or whichever loss function is chosen):
\begin{equation}
    \begin{split}
        	  &\max  \sum_{s \in S, a \in A} \left( \sum_i w_i r_i(s,a) 
	+ b(s,a) \right) \nu(s, a) \text{ s.t. }  \\
	 \ &\sum_{a \in A} \nu(s, a) = \sum_{s',a'} P(s|s', a') \nu(s', a')  + \mu_0(s) \quad \forall s \in S \\
	& \nu(s, a) \geq 0 \quad \forall s \in S, a \in A
	\label{eq:lp}
    \end{split}
\end{equation}
where $\mu_0$ denotes the initial state distribution. 

Note that if $\mathcal{L}$ were itself differentiable, we could take the gradient inside the expectation and apply the implicit function theorem to get $\nabla_{w, b} \mathbb{E}_{\boldsymbol{r}}\left[\mathcal{L}\left(\pi^*, w, b\right)\right]=\mathbb{E}_{\boldsymbol{r}}\left[\nabla_2 \mathcal{L}\left(\pi^*, w, b\right)+\nabla_{w, b} \pi^*(w, b, \boldsymbol{r}) \nabla_1 \mathcal{L}\left(\pi^*, w, b\right)\right]$. Then we could estimate the gradient of the expected value from a sum of gradients for different sampled $\bs{r}$. Indeed as we show in Appendix \ref{app:grad_asw}, for revenue we can even compute the partial derivatives of $\mathcal{L}$ with respect to $w,b$ analytically. However, because $\mathcal{L}$ is not differentiable (since $\pi^*$ is not), the Dominated Convergence Theorem does not hold and we cannot in fact exchange gradient and expected value. Instead we propose two alternatives to compute the gradient of the expected value. First using zeroth-order estimates and second introducing regularization, which makes the optimal policy in the inner problem unique $\mathcal{L}$ differentiable and thus actually allows us to estimate the gradient from samples. The pseudocode of our approach is given by Algorithm \ref{alg:algorithm}.

\subsubsection{Zeroth-order methods}
For bilevel problems, \citet{Sow2022Convergence} present a zeroth-order approach, which we adapt to our own setting.
The key observation of their approach is that the derivative of the leader's objective is the sum of two partial derivatives.
One of these, the derivative of the leader's objective with respect to their own solution, is usually easy to evaluate.
The other requires differentiating through the follower's best-response map and inverting a Jacobian, which is challenging---and this second portion can be separately estimated using zeroth-order perturbations.

Zeroth-order methods, due to the use of random perturbations, also implicitly smooth the function~\cite{duchi2012randomized}, so that when using them there is no further need for regularization to ensure that derivatives can be estimated. \footnote{They have been proposed in certain related (static) settings~\cite{Bichler2021Learning,Martin2022Finding} to cope with similar problems related to nonexistence of derivatives.} %

\subsubsection{Regularized linear program}
As an alternative method, we propose regularizing our problem to get a smooth surrogate of the derivative. 
We add a small entropy regularizer to the follower's objective. \footnote{A technique equivalent to such regularization has also been used to deal with a similar issue in first-order computation of equilibria of non-truthful static auctions~\cite{Kohring2023Enabling}.}
The result is now a convex program (see \ref{eq:smoothlp} in the appendix): 
When the objective is strongly convex, the solution map $\nu_r^*:(w,b)\mapsto \nu_{\pi^*(w,b,\bs{r})}$ is smooth.
Therefore, derivatives of revenue can now be estimated from the gradients at sampled type profiles, calculated using reverse-mode automatic differentiation~\cite{Agrawal2020Differentiating,Agrawal2019Differentiable}.

Adding a small amount of regularisation does not change the follower's problem significantly. As has been shown by \citet{Weed2018Explicit}, the distance between the solutions to the regularized and unregularized inner problem decays exponentially fast in the regularisation constant $\alpha$, for sufficiently small $\alpha$.

Under certain assumptions we claim this convergence translates to the outer problem, so that choosing a small $\alpha$ ensures the objective is not disturbed too much.

\begin{theorem} [Pointwise convergence of regularised loss]
\label{thm:regconv}
Assume that $\mathcal{L}(w,b,\bs{r})$ can be represented as an inner product between a vector $\bs{q}$ in $\R^{|S|\times|R|}$ and the state-action occupancy measure $\nu$, such that $\norm{q}_{\infty}\leq \mathcal{O}(\norm{r}_{\infty})$.
    Let $L_{\alpha}(w,b,\bs{r})$ denote the loss achieved with the optimal policy $\pi^{\alpha}(w,b,\bs{r})$ for the regularised problem (see equation \ref{eq:smoothlp} in Appendix \ref{app:reglp}). Assume further that $\mathbb{E}[\norm{\bs{r}}_\infty]$ exists, then 
    \[\lim_{\alpha\rightarrow 0}\E\left[\mathcal{L}_{\alpha}(w,b,\bs{r})\right]  = \E\left[\mathcal{L}(w,b,\bs{r})\right]\]
\end{theorem}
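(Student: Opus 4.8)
The plan is to reduce the claim to an application of the Dominated Convergence Theorem (DCT) along an arbitrary sequence $\alpha_n \to 0$. Two ingredients are needed: pointwise convergence $\mathcal{L}_{\alpha}(w,b,\bs{r}) \to \mathcal{L}(w,b,\bs{r})$ for almost every $\bs{r}$, and an $\alpha$-uniform, integrable dominating function for $|\mathcal{L}_\alpha|$. Since $\E[\norm{\bs{r}}_\infty]$ is assumed finite, once both are in place DCT yields $\lim_n \E[\mathcal{L}_{\alpha_n}] = \E[\lim_n \mathcal{L}_{\alpha_n}] = \E[\mathcal{L}]$; as the sequence is arbitrary, the stated limit follows.

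For pointwise convergence, fix $(w,b)$ and let $\nu^*$ and $\nu_\alpha$ denote the occupancy measures returned by the unregularized LP \eqref{eq:lp} and its entropy-regularized counterpart (equation \ref{eq:smoothlp} in Appendix \ref{app:reglp}). Using the assumed representation, where the \emph{same} cost vector $\bs{q}$ evaluates the loss of \emph{any} policy via $\mathcal{L}=\langle \bs{q},\nu\rangle$, I would bound
\[
|\mathcal{L}_\alpha(w,b,\bs{r}) - \mathcal{L}(w,b,\bs{r})| = |\langle \bs{q}, \nu_\alpha - \nu^*\rangle| \le \norm{\bs{q}}_\infty \norm{\nu_\alpha - \nu^*}_1 .
\]
By the result of \citet{Weed2018Explicit}, whenever the inner LP has a unique optimal occupancy measure the distance $\norm{\nu_\alpha - \nu^*}_1$ vanishes (indeed exponentially) as $\alpha \to 0$, so the right-hand side tends to $0$. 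The set of $\bs{r}$ for which the maximizer of the linear objective $\sum_i w_i r_i(s,a) + b(s,a)$ over the occupancy polytope is non-unique is exactly the set where two policies tie in affine social welfare; as argued in the proof of Theorem \ref{thm:ldiff}, for fixed $w,b$ this lies in a finite union of hyperplanes in $\bs{r}$-space and hence is $f$-null. Therefore $\mathcal{L}_\alpha \to \mathcal{L}$ for almost every $\bs{r}$.

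For the dominating function I would exploit that the regularizer changes only the objective, not the feasible set, so $\nu_\alpha$ is feasible for the same polytope as $\nu^*$. Summing the flow constraints of \eqref{eq:lp} over $s$ shows the total occupancy mass equals a constant $C_\nu$ independent of the policy and of $\alpha$; hence the polytope is compact and $\norm{\nu_\alpha}_1 \le C_\nu$ for all $\alpha$. Combined with the assumption $\norm{\bs{q}}_\infty \le \mathcal{O}(\norm{\bs{r}}_\infty)$ this gives
\[
|\mathcal{L}_\alpha(w,b,\bs{r})| \le \norm{\bs{q}}_\infty \norm{\nu_\alpha}_1 \le C\, C_\nu\, \norm{\bs{r}}_\infty =: G(\bs{r}),
\]
with $\E[G] = C\,C_\nu\,\E[\norm{\bs{r}}_\infty] < \infty$ by hypothesis, so $G$ dominates $|\mathcal{L}_{\alpha_n}|$ uniformly in $n$.

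The main obstacle is that Weed's convergence rate is \emph{not} uniform in $\bs{r}$: the constant controlling $\norm{\nu_\alpha - \nu^*}_1$ depends on the LP's suboptimality gap, which shrinks to $0$ as $\bs{r}$ approaches the tie-hyperplanes, so one cannot pass to the limit through a single uniform estimate. The domination argument is precisely what circumvents this—we use only rate-free pointwise convergence off a null set together with an integrable envelope, rather than any uniform rate. The two supporting facts to verify are that a single $\bs{q}$ is shared by $\mathcal{L}$ and $\mathcal{L}_\alpha$ (so their difference is $\langle \bs{q}, \nu_\alpha - \nu^*\rangle$) and that the tie set is genuinely null, the latter requiring $f$ to be absolutely continuous—both already granted by the setup and by the proof of Theorem \ref{thm:ldiff}.
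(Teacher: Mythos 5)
Your proof is correct and follows essentially the same route as the paper's: bound $|\mathcal{L}_\alpha - \mathcal{L}| = |\langle \bs{q}, \nu_\alpha - \nu^*\rangle| \le \norm{\bs{q}}_\infty \norm{\nu_\alpha - \nu^*}_1$ via the inner-product representation, invoke the exponential convergence result of \citet{Weed2018Explicit} for pointwise convergence at fixed $\bs{r}$, and conclude with the Dominated Convergence Theorem using the integrable envelope $C\norm{\bs{r}}_\infty$. Your explicit handling of the tie set---where the suboptimality gap $\Delta(\bs{r})$ vanishes, Weed's bound becomes vacuous, and the limit of $\nu_\alpha$ could even select a different point of the optimal face---as an $f$-null union of hyperplanes is a subtlety the paper's proof leaves implicit, so your write-up is, if anything, slightly more careful on the same argument.
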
 \
The proof of the Theorem can be found in Appendix \ref{app:reg_rev}. Given that both affine social welfare and social welfare can be represented as an inner product of $\nu$, Theorem \ref{thm:regconv} implies the convergence of regularised revenue. Similarly, we will later show the same holds for makespan. 
\begin{corollary}
 Let $\rev_{\alpha}(w,b)$ denote the revenue achieved with the optimal policy $\pi^{\alpha}(w,b,\bs{r})$. Assume that $\mathbb{E}[\norm{\bs{r}}_\infty]$ exists, then 
    \[\lim_{\alpha\rightarrow 0} \rev_{\alpha}(w,b) = \rev(w,b)\]
\end{corollary}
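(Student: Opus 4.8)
The plan is to reduce the claim directly to Theorem \ref{thm:regconv} by decomposing the revenue into pieces that each satisfy the inner-product hypothesis of that theorem, and then invoking linearity. Recall that revenue was written above as
\[
\rev(w,b,\bs{r}) = -\sum_{i=1}^n \tfrac{1}{w_i}\asw(w,b,\bs{r}) + \sw^{(\pi^*(w,b,\bs{r}))}(\bs{r}) + \sum_{i=1}^n \tfrac{1}{w_i}\asw^{(-i)}(w,b,\bs{r}).
\]
Since $w$ and $b$ are held fixed throughout the limit, each coefficient $1/w_i$ is a constant, so it suffices to show that each of the three building blocks---$\asw$, $\sw^{(\pi^*)}$, and $\asw^{(-i)}$---satisfies the hypotheses of Theorem \ref{thm:regconv}; the conclusion then follows by passing the limit through the finite sum and the expectation.

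First I would verify the inner-product representation and the norm bound for each term. Writing $\asw^{(\pi^*)} = \sum_{s,a}\bigl(\sum_i w_i r_i(s,a) + b(s,a)\bigr)\nu_{\pi^*}(s,a)$, take $q(s,a) = \sum_i w_i r_i(s,a) + b(s,a)$; then $\norm{q}_\infty \leq (\sum_i w_i)\norm{\bs{r}}_\infty + \norm{b}_\infty = \mathcal{O}(\norm{\bs{r}}_\infty)$, as required for fixed $w,b$ (the additive $\norm{b}_\infty$ is a lower-order constant and remains integrable). For the untransformed welfare, take $q(s,a) = \sum_i r_i(s,a)$, giving $\norm{q}_\infty \leq n\norm{\bs{r}}_\infty$. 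The point I would emphasize is that $\sw^{(\pi^*)}$ evaluates this fixed linear functional at the \emph{same} occupancy measure $\nu_{\pi^*}$ produced by the affine-welfare maximization, so its regularized counterpart is exactly $\langle q,\nu_{\pi^\alpha}\rangle$, matching the setup of the theorem verbatim.

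The only term needing extra care is $\asw^{(-i)}$, since it is the value of a \emph{different} maximization---affine welfare disregarding agent $i$---with its own optimal and regularized-optimal occupancy measures. I would handle it by applying Theorem \ref{thm:regconv} to the reduced MDP objective with coefficients $q(s,a) = \sum_{j\neq i} w_j r_j(s,a) + b(s,a)$, which again obeys $\norm{q}_\infty = \mathcal{O}(\norm{\bs{r}}_\infty)$; the theorem is stated for an arbitrary loss of inner-product form, so it applies to this smaller-agent instance without modification. This is the main (and only mild) obstacle: one must check that in the regularized mechanism the entropy term is applied consistently to each $(-i)$ subproblem, so that each $\asw_\alpha^{(-i)}$ is genuinely the regularized value of its own LP. Granting this, all three blocks converge in expectation, and since there are finitely many of them (and finitely many agents) I may interchange the limit with the finite sum to obtain $\lim_{\alpha\to 0}\rev_\alpha(w,b) = \rev(w,b)$.
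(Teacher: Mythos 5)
Your proposal is correct and takes essentially the same route as the paper: the paper justifies the corollary in one line by observing that each term in the revenue decomposition ($\asw$, $\sw^{(\pi^*)}$, and the $\asw^{(-i)}$ terms, the latter being affine welfare for the reduced agent set) is an inner product of a vector $\bs{q}$ with $\norm{\bs{q}}_\infty=\mathcal{O}(\norm{\bs{r}}_\infty)$ against the appropriate occupancy measure, so Theorem \ref{thm:regconv} applies termwise and the limit passes through the finite sum and expectation. Your explicit check that each $\asw^{(-i)}$ must be treated as the (regularized) value of its own LP is a point the paper leaves implicit, but it is the same argument.
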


\begin{algorithm}[tb]
\caption{Gradient-based Dynamic Mechanism Design}
\label{alg:algorithm}
\textbf{Input}: MDP $\mathcal M$, number of agents $n$, loss $\mathcal{L}$, number of initialisations $m$
\begin{algorithmic}[1] %
\STATE \textbf{Initialize:} $w\in \mathbb{R}^n$, $b \in \mathbb{R}^{|S|\times|A|}$
\STATE $(w_i,b_i)_{1\leq i\leq m}$ $\gets$ \textbf{grid search}$(\mathcal{M},\mathcal{L})$ \\
\COMMENT{Multiple starting points to avoid local minima}
\FOR{\texttt{num_iterations}}
    \FOR{$i = 1$ \TO $m$}
        \STATE $(w_i,b_i)$ $\gets$ $(w_i,b_i) + \gamma$ \textbf{gradientstep}$(\mathcal{M},w_i,b_i)$ \COMMENT{Zeroth or first order gradient estimate}
    \ENDFOR
\ENDFOR
\STATE \textbf{return} $\arg\min_{i\in \{1,\dots,m\}} \mathbb{E}[\mathcal{L}(w_i,b_i)]$
\end{algorithmic}
\end{algorithm}

\section{Experiments}
\label{sec:experiments}
\subsection{Methods}
We optimize AMAs in dynamic mechanism design settings on tabular MDPs. We compare three different mechanism design settings with different reward distributions. Our optimization methods consist of a na{\"i}ve grid search baseline, and two gradient-based methods (using either zero-order or regularized gradient estimates). The code we used to run these experiments is available at https://github.com/VnznzT/dynamicAMA.
\subsection{Implementation Details}

\paragraph{Grid search} As a methodological benchmark, we implement a na{\"i}ve grid search algorithm. We sample 10000 different weights and boosts from a Sobol sequence~\cite{Sobol1967distribution}. To assess the expected performance of each draw, we solve the associated dual linear program (LP), as detailed in Equation \eqref{eq:lp}, for 2000 randomly sampled reward profiles.

\paragraph{Zeroth-order methods} We estimate derivatives using 20 perturbations, sampled from a Gaussian distribution with standard deviation $0.05$ per estimate on 20 sampled type profiles. We use a learning rate of $0.1$.

\paragraph{Regularized LP.} We compute derivatives with respect to social welfare using the regularized LP, with smoothing parameter $10^{-2}$ except where mentioned. We solve the regularized program using MOSEK~\cite{mosek} and use the DiffOpt package within JuMP~\cite{Lubin2023} to differentiate. For the affine social welfare terms in the expected revenue we use the partial derivatives given in Appendix \ref{app:grad_asw}. For each stochastic gradient step, we sample 20 type profiles and optimize with learning rate $10^{-2}$.

In all cases, when evaluating the objective, we sample 10000 type profiles and do not use regularization. Thus at evaluation time, the LP solution is exactly correct, ensuring strategyproofness.
Computational details and hyperparameters are described in appendix \ref{app:hparam}. 

\subsection{Environments and Results}
\paragraph{Sequential Sales}

\begin{table}
\small
\centering
\begin{tabular}{@{}rrrrrrr@{}}
\toprule
\multicolumn{1}{l}{n} &
  \multicolumn{1}{l}{m} &
  \multicolumn{1}{l}{VCG} &
  \multicolumn{1}{l}{Grid} &
  \multicolumn{1}{l}{0-order} &
  \multicolumn{1}{l}{reg. LP} &
  \multicolumn{1}{l}{Best imp.}\\ \midrule
2 & 2 & 0.0000 & 0.4902 & \textbf{0.4939} & 0.4893          & \multicolumn{1}{r}{N/A}\\
3 & 2 & 0.4999 & 0.6079          & \textbf{0.6777} & 0.6755          & \multicolumn{1}{r}{35.56\%}\\
4 & 2 & 0.8000 & 0.7977          & \textbf{0.8783} & 0.8328          & \multicolumn{1}{r}{9.79\%}\\
5 & 2 & 0.9996 & 1.0050          & \textbf{1.0078} & 0.9967          & \multicolumn{1}{r}{0.83\%}\\
2 & 3 & 0.0000 & \textbf{0.4715} & 0.3825          & 0.4168          & \multicolumn{1}{r}{N/A}\\
3 & 3 & 0.0000 & 0.6107          & 0.6446          & \textbf{0.7240} & \multicolumn{1}{r}{N/A}\\
4 & 3 & 0.6007 & 0.7323          & 0.8875          & \textbf{0.9104} & \multicolumn{1}{r}{51.54\%}\\ 
5                    & 3                    & 0.9988               & 0.7142          & 1.0631          & \textbf{1.0743} & \multicolumn{1}{r}{7.56\%}\\ \bottomrule
\end{tabular}
\caption{Results for optimizing auction revenue in a sequential sales setting ($n$ agents, $m$ sales) with symmetric uniformly-distributed types. Standard errors were $<0.007$. Runtime was $<31$ hours for grid search, $<0.2$ hours for zeroth-order and $<1.1$ hours for first order.} 
\label{tab:revenue}
\end{table}

We begin with a simple setting in which identical items are sold sequentially to unit-demand bidders. The states consist of a record of who has received the item; the allowed actions are to sell the item to some bidder, or to no one. The welfare-maximizing mechanism thus involves giving the items to the highest bidder, but by altering the boosts, revenue can be increased by sometimes withholding the item. We consider a distribution of type profiles drawn uniformly from $[0,1]$, with results in Table~\ref{tab:revenue}. 

We observe that optimizing the boosts can consistently improve performance compared to VCG, especially when there are no tight supply constraints. Intuitively, if there is a large supply of goods, VCG revenue should be low, as agents do not cause much externality on other agents. By setting boosts to effectively withhold goods (like a reserve price), revenue can be increased.
\begin{table}[t]
\small
\centering
\begin{tabular}{@{}rrrrrrr@{}}
\toprule
\multicolumn{1}{l}{n} &
  \multicolumn{1}{l}{m} &
  \multicolumn{1}{l}{VCG} &
  \multicolumn{1}{l}{Grid} &
  \multicolumn{1}{l}{0-order} &
  \multicolumn{1}{l}{reg. LP} &
  \makecell {Best imp.} \\ \midrule
2 & 2 & 0.0000 & 0.3327          & 0.3302          & \textbf{0.3665} & \multicolumn{1}{r}{N/A}                     \\
3 & 2 & 0.2348 & 0.3217          & \textbf{0.4123} & 0.4116          & \multicolumn{1}{r}{75.55\%}                 \\
4 & 2 & 0.3089 & 0.3328          & 0.4021          & \textbf{0.4508} & \multicolumn{1}{r}{45.95\%}                 \\
5 & 2 & 0.3350 & 0.3355          & 0.4348          & \textbf{0.4645} & \multicolumn{1}{r}{38.65\%}                 \\
2 & 3 & 0.0000 & \textbf{0.3208} & 0.2544          & 0.3202          & \multicolumn{1}{r}{N/A} \\
3 & 3 & 0.0000 & 0.3304          & 0.3556          & \textbf{0.4354} & \multicolumn{1}{r}{N/A} \\
4 & 3 & 0.2276 & 0.3431          & 0.4263          & \textbf{0.4751} & \multicolumn{1}{r}{108.77\%}                \\ 
5 & 3 & 0.3193 & 0.2713          & 0.3770          & \textbf{0.4976} & \multicolumn{1}{r}{55.85\%}  \\ \bottomrule
\end{tabular}
\caption{Results for optimizing auction revenue in a sequential sales setting ($n$ agents, $m$ sales) with asymmetric uniformly-distributed types. Standard errors were $<0.004$. Runtime was $<32$ hours for grid search, $<0.2$ hours for zeroth-order and $<1.1$ hours for first order.}
\label{tab:revenueasym}
\end{table}

We also consider a distribution where agent $i$'s type is uniformly distributed on $[0,i]$. In this setting, we also allow the bidder weights to vary, with results in Table \ref{tab:revenueasym}. Again, we observe improved performance by optimizing the AMA parameters.

In both settings, the gradient-based approaches generally outperform the random grid search both in terms of results and runtime, in particular for the larger settings, which makes sense considering the high number of dimensions.

\paragraph{Dynamic truthful task scheduling}
\begin{table}[]
\small
\centering
\begin{tabular}{@{}rrrrrrr@{}}
\toprule
\multicolumn{1}{l}{n} &
  \multicolumn{1}{l}{m} &
  \multicolumn{1}{l}{VCG} &
  \multicolumn{1}{l}{Grid} &
  \multicolumn{1}{l}{0-order} &
  \multicolumn{1}{l}{reg. LP} &
  \multicolumn{1}{l}{Best imp.} \\ \midrule
2 & 4 & 1.0336 & 1.0326    & \textbf{0.9132} & 0.9288    & -11.65\%             \\
2 & 5 & 1.0116 & 1.0142    & \textbf{0.8820} & 0.9286    & -12.81\%             \\
3 & 4 & 0.6111 & 0.6196    & \textbf{0.5967} & 0.6184    & -2.36\%              \\
3 & 5 & 0.5662 & 0.5632    & \textbf{0.5429} & 0.5538    & -4.12\%              \\ \bottomrule
\end{tabular}
\caption{Results for minimizing makespan in the dynamic truthful task scheduling setting ($n$ agents, $m$ sales) with symmetric uniformly-distributed types. Standard errors were $<0.02$. Runtime was $<16$ hours for grid search, $<0.1$ hours for zeroth-order and $<4.1$ hours for first order. AMA outperforms VCG because the makespan is smaller.}
\label{tab:makespan}
\end{table}

\begin{table}[t]
\small
\centering
\begin{tabular}{@{}rrrrrrr@{}}
\toprule
\multicolumn{1}{l}{n} &
  \multicolumn{1}{l}{m} &
  \multicolumn{1}{l}{VCG} &
  \multicolumn{1}{l}{Grid} &
  \multicolumn{1}{l}{0-order} &
  \multicolumn{1}{l}{reg. LP} &
  \multicolumn{1}{l}{Best imp.} \\ \midrule
2 & 4 & 1.8312 & 1.8260    & \textbf{1.5978} & 1.6121    & -12.75\%             \\
2 & 5 & 1.9651 & 1.9837    & \textbf{1.6347} & 1.6886    & -16.81\%             \\
3 & 4 & 1.4299 & 1.4426    & \textbf{1.3242} & 1.3243    & -7.39\%              \\
3 & 5 & 1.4644 & 1.4729    & \textbf{1.3256} & 1.3629    & -9.48\%              \\ \bottomrule
\end{tabular}
\caption{Results for minimizing makespan in the dynamic truthful task scheduling setting ($n$ agents, $m$ sales) with asymmetric uniformly-distributed types. Standard errors were $<0.03$. Runtime was $<17$ hours for grid search, $<0.1$ hours for zeroth-order and $<4$ hours for first order. AMA outperforms VCG because the makespan is smaller.}
\label{tab:makespanasym}
\end{table}

The next setting we consider is a dynamic version of the classic truthful task scheduling problem \cite{Nisan2001Algorithmic}. In the static problem, workers report the time it takes them to complete certain tasks. A mechanism then has to assign the tasks and give payments\footnote{This in contrast to the auction environment where the mechanism could charge payments, because here agents suffer costs for which they need to be compensated.} to incentivize truthful reports with the goal of minimizing the makespan of all jobs.
We formulate a dynamic version of the truthful task scheduling problem. There are $n$ workers and $T$ tasks. Each round, one of the tasks arrives and has to be assigned. Each worker has a cost vector $\theta_i = (t_{i,1},\dots,t_{i,T})$ distributed according to some prior $f_i$, which consists of the times they take to finish each task. Each round $\tau$ the mechanism designer takes an allocative action $\bs{x}_\tau\in \{0,1\}^{n}$, s.t. $\sum_{i=1}^n \boldsymbol{x}_{\tau,i} =1$. At time $\tau$, this causes agent $i$ to receive reward $r_{i,\tau}=- \boldsymbol{x}_{\tau,i} t_{i,\tau}$.

The objective of the leader is to minimise total makespan. For this define $\tilde{t}_{i}$ as the time $i$ has already worked on its jobs, when the last task has been assigned in round $T$. The leader's loss is then given by 
$\max_{i\in \{1,\dots,n\}} \left(\left(\sum_{\tau=1}^T \boldsymbol{x}_{\tau,i} t_{i,\tau}\right)-\tilde{t}_i\right)$.

In order to justify using our approach for minimizing makespan, we need to show that Theorems \ref{thm:ldiff} and \ref{thm:regconv} apply. To see that makespan fulfills the assumption of Theorem \ref{thm:ldiff}, notice that it is a function of only $\bs{t}$ (which is $-\bs{r}$) and $\pi^*(w,b,\bs{r})$ and that the total makespan is always bound by $T\norm{\bs{r}}_{\infty}$. Therefore we can conclude that it is differentiable almost everywhere. Further notice that the makespan of a single agent (i.e. without the $\max$ operator in front) fulfills the assumptions of Theorem \ref{thm:regconv}. Let the maximum difference across all agent-specific makespans between the unregularised and regularised optimal policy be $\epsilon$, then the difference in the total makespan can be at most $2\epsilon$. But $\epsilon \rightarrow 0$ according to Theorem \ref{thm:regconv} which shows the makespan under the regularized optimal policies will converge to the correct makespan as $\alpha \rightarrow 0$.

As a benchmark, we consider a dynamic VCG mechanism that chooses the solution, which minimizes the total work done by the agents---an objective which is not the same as minimizing  makespan. (It can been shown, however, that VCG is an $n$-approximation of the optimal static mechanism~\cite{Nisan2001Algorithmic}.)

For a participant-symmetric valuation distribution (uniform on $[0,3]$), results are in Table \ref{tab:makespan}.  Across all environments, we see an improvement in makespan for the optimized AMAs over the VCG mechanism. 
We also consider an asymmetric distribution with disutilities distributed on $[0, 3i]$ for bidder $i$, with results in Table \ref{tab:makespanasym}.  Across all environments, we see a significant improvement in makespan for the best AMAs. In both settings, gradient-based approaches outperform the na{\"i}ve grid search both in terms of results and runtime, as was observed in the sequential auction environments.

\begin{table}[]
\small
\centering
\begin{tabular}{@{}rrrrrrr@{}}
\toprule
\multicolumn{1}{l}{n} &
  \multicolumn{1}{l}{m} &
  \multicolumn{1}{l}{VCG} &
  \multicolumn{1}{l}{Grid} &
  \multicolumn{1}{l}{0-order} &
  \multicolumn{1}{l}{reg. LP} &
  \multicolumn{1}{l}{Best imp.} \\ \midrule
2 & 3 & 0.7547 & 1.0607          & 1.3486 & \textbf{1.5464} & 104.90\% \\
3 & 3 & 1.2812 & 1.4348          & 1.5575 & \textbf{1.9251} & 50.25\%  \\
4 & 3 & 1.8683 & 1.8729          & 1.8853 & \textbf{2.3009} & 23.15\%  \\
5 & 3 & 2.3563 & 2.3689          & 1.9951 & \textbf{2.5989} & 10.30\%  \\
2 & 4 & 1.0402 & {1.0446} & 1.4935 & \textbf{1.6134} & 55.11\%  \\
3 & 4 & 1.4898 & 1.4904          & 1.6821 & \textbf{2.0171} & 35.40\%  \\
4 & 4 & 1.8610 & 1.8757          & 1.9427 & \textbf{2.3413} & 25.81\%  \\
5 & 4 & 2.2472 & 2.2133          & 2.0256 & \textbf{2.5911} & 15.30\%         \\ 
\bottomrule
\end{tabular}
\caption{Results for minimizing makespan in the gridworld environment. Standard errors were $<0.05$. Runtime was $<30$ hours for grid search, $<0.1$ hours for zeroth-order and $<0.1$ hours for first order.}
\label{tab:gridworldrev}
\end{table}

\paragraph{Navigating a grid with multiple tasks}
One of the most canonical environments in RL is the \textit{gridworld}, where an agent deterministically navigates a two-dimensional grid with rewards for reaching certain states~\cite{Sutton2018Reinforcement}. We consider the following variant: the mechanism moves ("up","down","left","right") in a grid with $n$ agents observing the trajectories. It starts in state $s_0$. Each agent $i$ draws a goal state $s_i \in S\setminus \{s_0\}$ and a reward $r_i \sim \mathcal{U}(0,1)$, which they receive when the mechanism reaches $s_i$. Given $(w,b)$, the mechanism finds a policy to maximize  affine social welfare $\pi^* \in \arg\max_{\pi} \E_{\pi}\left[\sum_{t=0}^\infty \gamma^t \left(\sum_{i=1}^n w_i r_i(s_t,a_t)+ b(s_t,a_t)\right)\right]$
 where $\gamma$ is the discount factor to account for the infinite time horizon. 
 One can interpret this environment as an auctioneer navigating an environment with different replenishing goods which agents wish to collect. The agents now bid to influence trajectories, and the auctioneer tries to maximize revenue. The results are in Table~\ref{tab:gridworldrev} and show that we can increase revenue by at least 10\% in every setting considered. Moreover, the gradient-based approaches are once more much faster and achieve better results than grid search. We further observed that the optimized boosts correspond to a preference of the mechanism for staying close to $s_0$. This can be interpreted roughly as a reserve price. 

Overall, we conclude that searching in the class of dynamic AMAs produces performant mechanisms in a variety of settings, which consistently outperform dynamic VCG. On a methodological level, a naïve grid search can perform well in settings with small dimensionality, but cannot scale up and takes a magnitude more runtime compared to our gradient-based approaches, which perform well across all settings. The best choice between zeroth-order and regularized LP depends on the mechanism design setting.

\section{Conclusion}
In this paper, we have proposed an approach for automated dynamic mechanism design. In contrast to earlier work, this formulation allows for a wide array of possible objectives (not just maximising social welfare) and works without strong restrictions on the type space. In principle, it captures essentially all problems of static mechanism design as a special case.
By focusing on the class of AMAs, we can frame the problem as stochastic bilevel optimization, where the mechanism designer acting in the outer problem chooses parameters to maximize their objective in expectation over possible rewards and the inner problem consists of optimally solving the MDP.

For the most prominent objective in mechanism design---expected revenue---we have further proven differentiability, which allows for gradient-based optimisation approaches to converge to locally optimal mechanisms. Because we restrict to the class of AMAs, all these mechanisms are guaranteed to be exactly IC and IR. 
To solve the bilevel problem, we have presented randomized grid search, as well as a zeroth and first order gradient-based algorithm to find well-performing mechanisms, which can beat the benchmark dynamic VCG mechanism across a broad range of environments we consider. The gradient-based methods we propose also consistently outperform na{\"i}ve grid search, which suffers from the curse of dimensionality.

The method we have presented is appropriate for any problem that can be formulated as controlling an MDP in the face of possibly untruthful preferences. This covers a wide range of interesting scenarios. In particular, the use of affine maximizers and the bilevel problem formulation are applicable to a broader range of settings, including those beyond the reach of tabular methods. Future work could apply deep RL for both the leader and follower, enabling scaling to significantly larger and more complicated problems, or apply our techniques to novel mechanism design settings.

\section*{Acknowledgments}

This paper is part of a project that has received funding from the European Research Council (ERC) under the European Union’s Horizon 2020 research and innovation program (Grant agreement No. 805542). N.H. is supported by ETH research grant and Swiss National Science Foundation (SNSF) Project Funding No. 200021-207343. V.T. is supported by an ETH AI Center Doctoral Fellowship. D.C. was supported by the National Science Foundation Graduate Research Fellowship Program under award number DGE-2036197. C.K. was supported by the Office of Naval Research awards N00014-22-1-2530 and N00014-23-1-2374, and the National Science Foundation awards IIS-2147361 and IIS-2238960. T.S. is supported by the Vannevar Bush Faculty Fellowship ONR N00014-23-1-2876, National Science Foundation grants RI-2312342 and
RI-1901403, ARO award W911NF2210266, and NIH award A240108S001.
 \bibliography{references}

\begin{thebibliography}{62}
\providecommand{\natexlab}[1]{#1}

\bibitem[{Agrawal et~al.(2019)Agrawal, Amos, Barratt, Boyd, Diamond, and
  Kolter}]{Agrawal2019Differentiable}
Agrawal, A.; Amos, B.; Barratt, S.; Boyd, S.; Diamond, S.; and Kolter, J.~Z.
  2019.
\newblock Differentiable Convex Optimization Layers.
\newblock In \emph{Neural Information Processing Systems (NeurIPS)}.

\bibitem[{Agrawal et~al.(2020)Agrawal, Barratt, Boyd, Busseti, and
  Moursi}]{Agrawal2020Differentiating}
Agrawal, A.; Barratt, S.; Boyd, S.; Busseti, E.; and Moursi, W.~M. 2020.
\newblock Differentiating {{Through}} a {{Cone Program}}.
\newblock arxiv:1904.09043.

\bibitem[{Altman(1999)}]{Altman1999Constrained}
Altman, E. 1999.
\newblock \emph{Constrained Markov decision processes}, volume~7.
\newblock CRC press.

\bibitem[{Amin, Rostamizadeh, and Syed(2014)}]{Amin2014Repeated}
Amin, K.; Rostamizadeh, A.; and Syed, U. 2014.
\newblock Repeated {{Contextual Auctions}} with {{Strategic Buyers}}.
\newblock In \emph{Neural Information Processing Systems (NeurIPS)}.

\bibitem[{ApS(2023)}]{mosek}
ApS, M. 2023.
\newblock \emph{The MOSEK optimization toolbox, version 10.0}.

\bibitem[{Athey and Segal(2013)}]{athey2013efficient}
Athey, S.; and Segal, I. 2013.
\newblock An efficient dynamic mechanism.
\newblock \emph{Econometrica}, 81(6): 2463--2485.

\bibitem[{Balcan et~al.(2021)Balcan, DeBlasio, Dick, Kingsford, Sandholm, and
  Vitercik}]{Balcan2021How}
Balcan, M.-F.; DeBlasio, D.; Dick, T.; Kingsford, C.; Sandholm, T.; and
  Vitercik, E. 2021.
\newblock How much data is sufficient to learn high-performing algorithms?
  {G}eneralization guarantees for data-driven algorithm design.
\newblock In \emph{ACM Symposium on Theory of Computing (STOC)}.

\bibitem[{Balcan, Sandholm, and Vitercik(2018)}]{Balcan2018General}
Balcan, M.-F.; Sandholm, T.; and Vitercik, E. 2018.
\newblock A general theory of sample complexity for multi-item profit
  maximization.
\newblock In \emph{Economics and Computation (EC)}.

\bibitem[{Balcan, Sandholm, and Vitercik(2016)}]{Balcan2016Sample}
Balcan, M.-F.~F.; Sandholm, T.; and Vitercik, E. 2016.
\newblock Sample complexity of automated mechanism design.
\newblock In \emph{Neural Information Processing Systems (NeurIPS)}.

\bibitem[{Bergemann and V{\"a}lim{\"a}ki(2010)}]{bergemann2010dynamic}
Bergemann, D.; and V{\"a}lim{\"a}ki, J. 2010.
\newblock The dynamic pivot mechanism.
\newblock \emph{Econometrica}, 78(2): 771--789.

\bibitem[{Bergemann and V{\"a}lim{\"a}ki(2019)}]{Bergemann2019Dynamic}
Bergemann, D.; and V{\"a}lim{\"a}ki, J. 2019.
\newblock Dynamic {{Mechanism Design}}: {{An Introduction}}.
\newblock \emph{Journal of Economic Literature}, 57(2): 235--274.

\bibitem[{Bichler et~al.(2021)Bichler, Fichtl, Heidekr{\"u}ger, Kohring, and
  Sutterer}]{Bichler2021Learning}
Bichler, M.; Fichtl, M.; Heidekr{\"u}ger, S.; Kohring, N.; and Sutterer, P.
  2021.
\newblock Learning equilibria in symmetric auction games using artificial
  neural networks.
\newblock \emph{Nature Machine Intelligence}, 3(8): 687--695.

\bibitem[{Blum et~al.(2004)Blum, Jackson, Sandholm, and
  Zinkevich}]{Blum2004Preference}
Blum, A.; Jackson, J.~C.; Sandholm, T.; and Zinkevich, M. 2004.
\newblock Preference {{Elicitation}} and {{Query Learning}}.
\newblock \emph{Journal of Machine Learning Research}.

\bibitem[{Brero, Lubin, and Seuken(2019)}]{Brero2019Machine}
Brero, G.; Lubin, B.; and Seuken, S. 2019.
\newblock Machine Learning-powered Iterative Combinatorial Auctions.
\newblock \emph{arXiv preprint arXiv:1911.08042}.

\bibitem[{Chang et~al.(2020)Chang, Kaushik, Weinberg, Griffiths, and
  Levine}]{Chang2020Decentralized}
Chang, M.; Kaushik, S.; Weinberg, S.~M.; Griffiths, T.; and Levine, S. 2020.
\newblock Decentralized Reinforcement Learning: Global Decision-Making via
  Local Economic Transactions.
\newblock In III, H.~D.; and Singh, A., eds., \emph{Proceedings of the 37th
  International Conference on Machine Learning}, volume 119 of
  \emph{Proceedings of Machine Learning Research}, 1437--1447. PMLR.

\bibitem[{Clarke(1971)}]{Clarke1971Multipart}
Clarke, E.~H. 1971.
\newblock Multipart pricing of public goods.
\newblock \emph{Public Choice}, 17--33.

\bibitem[{Conen and Sandholm(2001)}]{Conen01:Preference}
Conen, W.; and Sandholm, T. 2001.
\newblock Preference Elicitation in Combinatorial Auctions: Extended Abstract.
\newblock In \emph{Economics and Computation (EC)}, 256--259.
\newblock More detailed description of algorithmic aspects in IJCAI-01 Workshop
  on Economic Agents, Models, and Mechanisms, pp. 71--80.

\bibitem[{Conitzer and Sandholm(2002)}]{Conitzer02Mechanism}
Conitzer, V.; and Sandholm, T. 2002.
\newblock Complexity of Mechanism Design.
\newblock In \emph{Uncertainty in Artificial Intelligence (UAI)}.

\bibitem[{Conitzer and Sandholm(2007)}]{Conitzer07Incremental}
Conitzer, V.; and Sandholm, T. 2007.
\newblock Incremental Mechanism Design.
\newblock In \emph{International Joint Conference on Artificial Intelligence
  (IJCAI)}.

\bibitem[{Curry et~al.(2020)Curry, Chiang, Goldstein, and
  Dickerson}]{Curry2020Certifying}
Curry, M.; Chiang, P.-y.; Goldstein, T.; and Dickerson, J.~P. 2020.
\newblock Certifying Strategyproof Auction Networks.
\newblock In \emph{Neural Information Processing Systems (NeurIPS)}.

\bibitem[{Curry, Sandholm, and Dickerson(2023)}]{Curry2022Differentiable}
Curry, M.; Sandholm, T.; and Dickerson, J. 2023.
\newblock Differentiable Economics for Randomized Affine Maximizer Auctions.
\newblock In \emph{International Joint Conference on Artificial Intelligence
  (IJCAI)}.

\bibitem[{Curry et~al.(2021)Curry, Lyi, Goldstein, and
  Dickerson}]{Curry2021Learning}
Curry, M.~J.; Lyi, U.; Goldstein, T.; and Dickerson, J. 2021.
\newblock Learning Revenue-Maximizing Auctions With Differentiable Matching.
\newblock In \emph{International Conference on Artificial Intelligence and
  Statistics (AISTATS)}.

\bibitem[{Duan et~al.(2023)Duan, Sun, Chen, and Deng}]{Duan2023Scalable}
Duan, Z.; Sun, H.; Chen, Y.; and Deng, X. 2023.
\newblock A {{Scalable Neural Network}} for {{DSIC Affine Maximizer Auction
  Design}}.
\newblock .

\bibitem[{Duchi, Bartlett, and Wainwright(2012)}]{duchi2012randomized}
Duchi, J.~C.; Bartlett, P.~L.; and Wainwright, M.~J. 2012.
\newblock Randomized smoothing for stochastic optimization.
\newblock \emph{SIAM Journal on Optimization}, 22(2): 674--701.

\bibitem[{Duetting et~al.(2019)Duetting, Feng, Narasimhan, Parkes, and
  Ravindranath}]{Duetting2019Optimal}
Duetting, P.; Feng, Z.; Narasimhan, H.; Parkes, D.~C.; and Ravindranath, S.~S.
  2019.
\newblock Optimal Auctions through Deep Learning.
\newblock In \emph{International Conference on Machine Learning (ICML)}.

\bibitem[{Groves(1973)}]{Groves1973Incentives}
Groves, T. 1973.
\newblock Incentives in teams.
\newblock \emph{Econometrica: Journal of the Econometric Society}, 617--631.

\bibitem[{Hajiaghayi, Kleinberg, and Parkes(2004)}]{hajiaghayi2004adaptive}
Hajiaghayi, M.; Kleinberg, R.; and Parkes, D.~C. 2004.
\newblock Adaptive limited-supply online auctions.
\newblock In \emph{Electronic Commerce (EC)}.

\bibitem[{Hajiaghayi, Kleinberg, and Sandholm(2007)}]{hajiaghayi2007automated}
Hajiaghayi, M.; Kleinberg, R.; and Sandholm, T. 2007.
\newblock Automated online mechanism design and prophet inequalities.
\newblock In \emph{AAAI Conference on Artificial Intelligence (AAAI)}.

\bibitem[{Ivanov et~al.(2022)Ivanov, Safiulin, Balabaeva, and
  Filippov}]{ivanov2022optimal}
Ivanov, D.; Safiulin, I.; Balabaeva, K.; and Filippov, I. 2022.
\newblock Optimal-er Auctions through Attention.
\newblock \emph{arXiv preprint arXiv:2202.13110}.

\bibitem[{Kakade, Lobel, and Nazerzadeh(2013)}]{kakade2013optimal}
Kakade, S.~M.; Lobel, I.; and Nazerzadeh, H. 2013.
\newblock Optimal dynamic mechanism design and the virtual-pivot mechanism.
\newblock \emph{Operations Research}, 61(4): 837--854.

\bibitem[{Kleinberg(2005)}]{Kleinberg2005MultipleChoice}
Kleinberg, R. 2005.
\newblock A {{Multiple-Choice Secretary Algorithm}} with {{Applications}} to
  {{Online Auctions}}.
\newblock In \emph{Symposium on {{Discrete Algorithms}} ({{SODA}})}.

\bibitem[{Kohring, Pieroth, and Bichler(2023)}]{Kohring2023Enabling}
Kohring, N.; Pieroth, F.~R.; and Bichler, M. 2023.
\newblock Enabling {{First-Order Gradient-Based Learning}} for {{Equilibrium
  Computation}} in {{Markets}}.
\newblock \emph{arXiv preprint arXiv:2303.09500}.

\bibitem[{Lahaie and Parkes(2004)}]{Lahaie2004Applying}
Lahaie, S.~M.; and Parkes, D.~C. 2004.
\newblock Applying Learning Algorithms to Preference Elicitation.
\newblock In \emph{Proceedings of the 5th {{ACM}} Conference on {{Electronic}}
  Commerce}, {{EC}} '04. {New York, NY, USA}: {Association for Computing
  Machinery}.

\bibitem[{Likhodedov and Sandholm(2004)}]{Likhodedov2004Methods}
Likhodedov, A.; and Sandholm, T. 2004.
\newblock Methods for boosting revenue in combinatorial auctions.
\newblock In \emph{AAAI Conference on Artificial Intelligence (AAAI)}.

\bibitem[{Lubin et~al.(2023)Lubin, Dowson, Garcia, Huchette, Legat, and
  Vielma}]{Lubin2023}
Lubin, M.; Dowson, O.; Garcia, J.~D.; Huchette, J.; Legat, B.; and Vielma,
  J.~P. 2023.
\newblock JuMP 1.0: Recent improvements to a modeling language for mathematical
  optimization.
\newblock \emph{Mathematical Programming Computation}.
\newblock In press.

\bibitem[{Lyu et~al.(2022{\natexlab{a}})Lyu, Meng, Qiu, Wang, Yang, and
  Jordan}]{Lyu2022Learning}
Lyu, B.; Meng, Q.; Qiu, S.; Wang, Z.; Yang, Z.; and Jordan, M.~I.
  2022{\natexlab{a}}.
\newblock Learning Dynamic Mechanisms in Unknown Environments: A Reinforcement
  Learning Approach.

\bibitem[{Lyu et~al.(2022{\natexlab{b}})Lyu, Wang, Kolar, and
  Yang}]{lyu2022pessimism}
Lyu, B.; Wang, Z.; Kolar, M.; and Yang, Z. 2022{\natexlab{b}}.
\newblock Pessimism meets {{VCG}}: Learning Dynamic Mechanism Design via
  Offline Reinforcement Learning.
\newblock \emph{arXiv preprint arXiv:2205.02450}.

\bibitem[{Martin and Sandholm(2023)}]{Martin2022Finding}
Martin, C.; and Sandholm, T. 2023.
\newblock Finding mixed-strategy equilibria of continuous-action games without
  gradients using randomized policy networks.
\newblock In \emph{International Joint Conference on Artificial Intelligence
  (IJCAI)}.

\bibitem[{Myerson(1981)}]{Myerson1981Optimal}
Myerson, R.~B. 1981.
\newblock Optimal auction design.
\newblock \emph{Mathematics of Operations Research}, 6(1): 58--73.

\bibitem[{Nisan and Ronen(2001)}]{Nisan2001Algorithmic}
Nisan, N.; and Ronen, A. 2001.
\newblock Algorithmic Mechanism Design.
\newblock \emph{Games and Economic Behavior}, 35(1-2): 166--196.

\bibitem[{Nisan et~al.(2007{\natexlab{a}})Nisan, Roughgarden, Tardos, and
  Vazirani}]{Nisan2007Algorithmic}
Nisan, N.; Roughgarden, T.; Tardos, E.; and Vazirani, V.~V. 2007{\natexlab{a}}.
\newblock \emph{Algorithmic Game Theory}.
\newblock Cambridge University Press.

\bibitem[{Nisan et~al.(2007{\natexlab{b}})Nisan, Roughgarden, Tardos, and
  Vazirani}]{ParkesChapterNisan2007algorithmic}
Nisan, N.; Roughgarden, T.; Tardos, E.; and Vazirani, V.~V. 2007{\natexlab{b}}.
\newblock \emph{Algorithmic Game Theory}, chapter~16.
\newblock Cambridge University Press.

\bibitem[{Pai and Vohra(2008)}]{Pai2008Optimal}
Pai, M.; and Vohra, R.~V. 2008.
\newblock Optimal Dynamic Auctions.
\newblock Working {{Paper}} 1461, {Discussion Paper}.

\bibitem[{Pavan, Segal, and Toikka(2014)}]{pavan2014dynamic}
Pavan, A.; Segal, I.; and Toikka, J. 2014.
\newblock Dynamic mechanism design: A {Myersonian} approach.
\newblock \emph{Econometrica}, 82(2): 601--653.

\bibitem[{Rahme et~al.(2021)Rahme, Jelassi, Bruna, and
  Weinberg}]{Rahme2020Permutation}
Rahme, J.; Jelassi, S.; Bruna, J.; and Weinberg, S.~M. 2021.
\newblock A Permutation-Equivariant Neural Network Architecture For Auction
  Design.
\newblock In \emph{AAAI Conference on Artificial Intelligence (AAAI)}.

\bibitem[{Rahme, Jelassi, and Weinberg(2021)}]{Rahme2021Auction}
Rahme, J.; Jelassi, S.; and Weinberg, S.~M. 2021.
\newblock Auction learning as a two-player game.
\newblock In \emph{International Conference on Learning Representations
  (ICLR)}.

\bibitem[{Roberts(1979)}]{Roberts1979Characterization}
Roberts, K. 1979.
\newblock The characterization of implementable choice rules.
\newblock \emph{Aggregation and Revelation of Preferences}, 12(2): 321--348.

\bibitem[{Sandholm(2003)}]{Sandholm2003Automated}
Sandholm, T. 2003.
\newblock Automated mechanism design: A new application area for search
  algorithms.
\newblock In \emph{International Conference on Principles and Practice of
  Constraint Programming}.

\bibitem[{Sandholm and Boutilier(2006)}]{Sandholm06:Preference}
Sandholm, T.; and Boutilier, C. 2006.
\newblock Preference Elicitation in Combinatorial Auctions.
\newblock In Cramton, P.; Shoham, Y.; and Steinberg, R., eds.,
  \emph{Combinatorial Auctions}, 233--263. MIT Press.
\newblock Chapter 10.

\bibitem[{Sandholm and Likhodedov(2015)}]{Sandholm2015Automated}
Sandholm, T.; and Likhodedov, A. 2015.
\newblock Automated design of revenue-maximizing combinatorial auctions.
\newblock \emph{Operations Research}, 63(5): 1000--1025.

\bibitem[{Sandholm, Conitzer, and Boutilier(2007)}]{Sandholm2007Automated}
Sandholm, T.~W.; Conitzer, V.; and Boutilier, C. 2007.
\newblock Automated {{Design}} of {{Multistage Mechanisms}}.
\newblock In \emph{International Joint Conference on Artificial Intelligence
  (IJCAI)}.

\bibitem[{Shen et~al.(2017)Shen, Peng, Liu, Zhang, Qian, Hong, Guo, Ding, Lu,
  and Tang}]{Shen2017Reinforcement}
Shen, W.; Peng, B.; Liu, H.; Zhang, M.; Qian, R.; Hong, Y.; Guo, Z.; Ding, Z.;
  Lu, P.; and Tang, P. 2017.
\newblock Reinforcement Mechanism Design, with Applications to Dynamic Pricing
  in Sponsored Search Auctions.
\newblock \emph{CoRR}, abs/1711.10279.

\bibitem[{Shen, Tang, and Zuo(2019)}]{Shen2019Automated}
Shen, W.; Tang, P.; and Zuo, S. 2019.
\newblock Automated mechanism design via neural networks.
\newblock In \emph{International Conference on Autonomous Agents and MultiAgent
  Systems (AAMAS)}.

\bibitem[{Sobol'(1967)}]{Sobol1967distribution}
Sobol', I. 1967.
\newblock On the distribution of points in a cube and the approximate
  evaluation of integrals.
\newblock \emph{USSR Computational Mathematics and Mathematical Physics}, 7(4):
  86--112.

\bibitem[{Soumalias et~al.(2023)Soumalias, Zamanlooy, Weissteiner, and
  Seuken}]{Soumalias2023Machine}
Soumalias, E.; Zamanlooy, B.; Weissteiner, J.; and Seuken, S. 2023.
\newblock Machine {{Learning-powered Course Allocation}}.

\bibitem[{Sow, Ji, and Liang(2022)}]{Sow2022Convergence}
Sow, D.; Ji, K.; and Liang, Y. 2022.
\newblock On the Convergence Theory for Hessian-Free Bilevel Algorithms.
\newblock In Koyejo, S.; Mohamed, S.; Agarwal, A.; Belgrave, D.; Cho, K.; and
  Oh, A., eds., \emph{Neural Information Processing Systems (NeurIPS)}.

\bibitem[{Sutton(2018)}]{Sutton2018Reinforcement}
Sutton, R.~S. 2018.
\newblock \emph{Reinforcement learning : : an introduction}.
\newblock Adaptive computation and machine learning series. Cambridge,
  Massachusetts: The MIT Press, second edition. edition.
\newblock ISBN 0-262-35270-2.

\bibitem[{Vickrey(1961)}]{Vickrey1961Counterspeculation}
Vickrey, W. 1961.
\newblock Counterspeculation, auctions, and competitive sealed tenders.
\newblock \emph{The Journal of Finance}, 16(1): 8--37.

\bibitem[{Weed(2018)}]{Weed2018Explicit}
Weed, J. 2018.
\newblock An explicit analysis of the entropic penalty in linear programming.
\newblock In Bubeck, S.; Perchet, V.; and Rigollet, P., eds., \emph{Proceedings
  of the 31st Conference On Learning Theory}, volume~75 of \emph{Proceedings of
  Machine Learning Research}, 1841--1855. PMLR.

\bibitem[{Weissteiner et~al.(2023)Weissteiner, Heiss, Siems, and
  Seuken}]{Weissteiner2023Bayesian}
Weissteiner, J.; Heiss, J.; Siems, J.; and Seuken, S. 2023.
\newblock Bayesian {{Optimization-based Combinatorial Assignment}}.
\newblock \emph{arXiv preprint arXiv:2208.14698}.

\bibitem[{Yao(2017)}]{Yao2017Dominant}
Yao, A. C.-C. 2017.
\newblock Dominant-strategy versus bayesian multi-item auctions: Maximum
  revenue determination and comparison.
\newblock In \emph{Economics and Computation (EC)}.

\bibitem[{Zinkevich, Blum, and Sandholm(2003)}]{Zinkevich03:Polynomial}
Zinkevich, M.; Blum, A.; and Sandholm, T. 2003.
\newblock On Polynomial-Time Preference Elicitation with Value Queries.
\newblock In \emph{Economics and Computation (EC)}, 176--185.

\end{thebibliography}

\onecolumn
\appendix
\section{Affine maximizers are incentive compatible}
\label{app:amaisic}

\begin{theorem}
For any choice of fixed weights and boosts, affine maximizers are incentive compatible.
\end{theorem}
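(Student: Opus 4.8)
The plan is to run the standard affine-maximizer (VCG-style) argument, adapted to the fact that welfare here is an expectation over MDP trajectories. Fix an agent $i$ and arbitrary reports $r_{-i}$ of the other agents (not necessarily truthful --- this is what yields the ex-post/dominant-strategy claim), and let $r_i$ be $i$'s true reward function. For a candidate report $\hat r_i$, write $\hat{\bs{r}}=(\hat r_i, r_{-i})$. Since utilities are quasilinear, agent $i$'s utility from reporting $\hat r_i$ is
\[
u_i(\hat r_i)=\E_{\pi^*(w,b,\hat{\bs{r}})}\!\left[\sum_{t=0}^T r_i(s_t,a_t)\right]-\bs{p}_i(w,b,\hat{\bs{r}}),
\]
where crucially the trajectory distribution is governed by the policy $\pi^*(w,b,\hat{\bs{r}})$ selected from the \emph{reports}, while the summand uses $i$'s \emph{true} reward $r_i$.

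Next I would substitute the payment rule and use that $w_i>0$ (weights lie in $\R_+^n$), so maximizing $u_i$ is equivalent to maximizing $w_i u_i$. After multiplying through, the term $\asw^{(-i)}(w,b,\hat{\bs{r}})$ is independent of $\hat r_i$ (it disregards $i$), hence a constant that drops out of the optimization. The surviving terms $w_i\,\E_{\pi^*}[\sum_t r_i(s_t,a_t)]$ and $\E_{\pi^*}[\sum_t(\sum_{j\neq i} w_j r_j(s_t,a_t))+b(s_t,a_t)]$ combine, by linearity of expectation, into
\[
\E_{\pi^*(w,b,\hat{\bs{r}})}\!\left[\sum_{t=0}^T\Big(\sum_{j=1}^n w_j r_j(s_t,a_t)\Big)+b(s_t,a_t)\right]=\asw^{(\pi^*(w,b,\hat{\bs{r}}))}(w,b,\bs{r}),
\]
which is precisely the affine social welfare of the \emph{true} profile $\bs{r}=(r_i,r_{-i})$, evaluated under the report-induced policy.

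The final step invokes the defining optimality of $\pi^*$: since $\pi^*(w,b,\bs{r})$ maximizes $\asw^{(\pi)}(w,b,\bs{r})$ over all policies, the particular policy $\pi^*(w,b,\hat{\bs{r}})$ satisfies $\asw^{(\pi^*(w,b,\hat{\bs{r}}))}(w,b,\bs{r})\le \asw(w,b,\bs{r})$, with equality attained by truthful reporting $\hat r_i=r_i$ (which makes the chosen policy coincide with the true-welfare maximizer). Hence $w_i u_i$, and therefore $u_i$, is maximized by reporting truthfully, for every realization of $r_{-i}$, giving dominant-strategy incentive compatibility. I do not expect a deep obstacle here; the only thing requiring care is the bookkeeping --- keeping the reported reward (which fixes both the policy and the payment) cleanly separate from the true reward (which determines realized value), and noting that because affine social welfare is linear in the trajectory distribution/occupancy measure, the classical single-shot argument carries over verbatim with $\E_\pi[\cdot]$ in place of a one-shot allocation. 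A one-line remark that $\asw^{(-i)}$ is finite (so payments are well-defined) closes the proof; IR, if one wanted it, would follow from a separate bound but is not needed for IC.
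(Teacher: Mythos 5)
Your proof is correct and follows essentially the same route as the paper's: write agent $i$'s utility under the report-induced policy with true rewards, substitute the payment rule, drop $\asw^{(-i)}$ as a constant in the report, rescale by $w_i>0$, and observe that the remaining expression is exactly the true-profile affine social welfare evaluated at the report-induced policy, which is maximized by truthful reporting due to the defining optimality of $\pi^*$. The bookkeeping you flag (report fixes policy and payment, true reward fixes realized value) is precisely the care the paper's proof takes, so there is nothing to add.
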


\begin{proof}
(Following the standard structure for truthfulness of VCG in the static case.) Consider player $i$ with true reward function $r_i$ and reported rewards $\tilde{r}_i$, where other players have rewards $r_{-i}$.
Recall that the payments charged for each agent can be written as

\begin{align*}
    p_i(\bs{r})&=  \frac{1}{w_i}\left(\asw^{(-i)}(w,b,\bs{r})-\left(\E_{\pi^*(w,b,\bs{r})}\left[\sum_{t=0}^T\left(\sum_{j\neq i}  w_j r_j(s_t,a_t)\right)+ b(s_t,a_t)\right] \right) \right) \\
 &=\frac{1}{w_i}\left(\asw^{(-i)}(w,b,\bs{r})-\asw(w,b,\bs{r})+\sum_{t=0}^T w_i \E_{\pi^*(w,b,\bs{r})}[r_i(s_t,a_t)]\right)
\end{align*}
where $\bs{r}=(r_i,r_{-i})$.
The true expected utility at the time they report, as a function of their true reward function and a possible misreport $\Tilde{r}_i$, is
\begin{equation*}
    U_i(r_i, \hat{r}_i) = \mathbb{E}_{\pi^*(w,b,\Tilde{r}_i,r_{-i})}\left[\sum_{t=0}^T r_i(s_t, a_t) - \frac{1}{w_i}\left(\asw^{(-i)}(w, b, \bs{r}) - \left(\sum_{t=0}^T\left(\sum_{j\neq i}  w_j r_j(s_t,a_t)\right)+ b(s_t,a_t)\right)\right)\right]
\end{equation*}
where the expectation is over the randomness in the MDP (note that it does \textit{not} need to be over the randomness in opponent types -- IC should hold in dominant strategies considering the opponents).

In choosing a misreport $\tilde{r}_i$, player $i$ thus faces a maximization problem:
\begin{align*}
    \arg\max_{\tilde{r}_i} U_i(r_i, \tilde{r}_i) &= \arg\max_{\tilde{r}_i}  \mathbb{E}_{\pi^*(w,b,\Tilde{r}_i,r_{-i})}\Biggl[\sum_{t=0}^T r_i(s_t, a_t) - \frac{1}{w_i}\Biggl(\asw^{(-i)}(w, b, \bs{r}) \\ &\quad\qquad\qquad - \Biggl(\sum_{t=0}^T\Biggl(\sum_{j\neq i}  w_j r_j(s_t,a_t)\Biggr)+ b(s_t,a_t)\Biggr)\Biggr)\Biggr]\\
    &= \arg\max_{\tilde{r}_i} \mathbb{E}_{\pi^*(w,b,\Tilde{r}_i,r_{-i})}\left[\sum_{t=0}^T r_i(s_t, a_t) +  \frac{1}{w_i}\left(\sum_{t=0}^T \sum_{j \neq i} w_j r_j(s_t, a_t) + b(s_t, a_t)\right)\right] - c \\
        &= \arg\max_{\tilde{r}_i} \mathbb{E}_{\pi^*(w,b,\Tilde{r}_i,r_{-i})}\left[\sum_{t=0}^T w_i r_i(s_t, a_t) +  \sum_{t=0}^T \sum_{j \neq i} w_j r_j(s_t, a_t) + b(s_t, a_t)\right]\\
\end{align*}
where $c$ is a constant.
This is exactly the objective that the mechanism attempts to maximize if the bidder reports truthfully, so the best choice they can make is to do so.
\end{proof}

\subsection{IR and IC guarantees}

In mechanism design, there is often a distinction between:
\begin{itemize}
    \item \textit{Ex post} -- properties that hold after all types have been reported and decisions have been made
    \item \textit{Ex interim} -- properties that hold after a bidder has observed their own type, but before seeing other types.
    \item \textit{Ex ante} -- properties that hold before types have been observed.
\end{itemize}

As far as the prior distribution of agent types is concerned, all our guarantees are \textit{ex post} -- i.e. we ensure dominant-strategy incentive compatibility and ex-post IR.

In our problem, we have an additional source of randomness, the inherent randomness in the MDP itself.
We thus refer to ``in expectation'' to refer to properties that hold when averaging over the randomness in the MDP, but ex post in the types.

Given the above choices of terminology, and given a correct policy and value estimate, our chosen mechanism and payment rules can guarantee in-expectation incentive compatibility and individual rationality.

\section{Proof of  Theorem \ref{thm:ldiff}}
\label{app:proofdiff}
First, let us note that by Rademacher's theorem differentiability almost everywhere follows from local Lipschitz continuity of a function. Moreover, the expected value and derivative are linear operators. Together this implies we only need to show that the expected values of $\asw$ and $g_k$ are locally Lipschitz continuous.

In this proof, we identify boosts and agents' rewards with vectors in $\mathbb{R}^{|S| \times |A|}$ and weights with vectors in $\mathbb{R}^n_{+}$. Moreover, we will assume that $f$ is a continuous density function, according to which $\bs{r}$ is distributed and that it is sufficiently well-behaved, such that $\mathbb{E}_{\bs{r}}[\norm{\bs{r}}_\infty]$ exists.

\subsection{Affine social welfare is Locally Lipschitz continuous}
Here we only show continuity in $w$, since the proof for $b$ works analogous. Fix $\bs{r}\in \R^{|S|\times|A|\times n}$ and let $w,\hat{w} \in \R_{+}^n$, s.t. $\exists ! j: w_j \neq \hat{w}_j$. We need to show that $\exists K_{\bs{r}}$, s.t.

\begin{dmath}
\begin{split}  
| \left(\sum_{i=1}^n\sum_{t=0}^T w_i \E_{s_t,a_t \sim \pi}[r_i(s_t,a_t)]\right) + \sum_{t=0}^T \E_{s_t,a_t \sim \pi}[b(s_t,a_t)]
    -  \left(\sum_{i=1,i\neq j}^n\sum_{t=0}^T w_i \E_{s_t,a_t \sim \hat{\pi}}[r_i(s_t,a_t)]\right) \\ - \sum_{t=0}^T \hat{w}_j  \E_{s_t,a_t \sim \hat{\pi}}[r_j(s_t,a_t)] - \sum_{t=0}^T \E_{s_t,a_t \sim \hat{\pi}}[b(s_t,a_t)] | \leq K_{\bs{r}} |w_j-\hat{w}_j|
\end{split}
\end{dmath}
where $\pi=\pi^*(w,b,\bs{r})$ and $\hat{\pi}=\pi^*(\hat{w},b,\bs{r})$.
To perform our proof we need the following inequalities:

By optimality of $\pi$
\begin{equation}
\label{eq:piopt}
    \begin{split}
        \sum_{i=1,i\neq j}^n(\sum_{t=0}^T w_i \E_{s_t,a_t \sim {\pi}}[r_i(s_t,a_t)]) + \sum_{t=0}^T {w}_j  \E_{s_t,a_t \sim {\pi}}[r_j(s_t,a_t)] + \sum_{t=0}^T \E_{s_t,a_t \sim {\pi}}[b(s_t,a_t)]
    \\ \geq  \sum_{i=1,i\neq j}^n(\sum_{t=0}^T w_i \E_{s_t,a_t \sim \hat{\pi}}[r_i(s_t,a_t)]) + \sum_{t=0}^T {w}_j  \E_{s_t,a_t \sim \hat{\pi}}[r_j(s_t,a_t)] + \sum_{t=0}^T \E_{s_t,a_t \sim \hat{\pi}}[b(s_t,a_t)]
    \end{split}
\end{equation}

By optimality of $\hat{\pi}$
\begin{equation}
\label{eq:pihatopt}
    \begin{split}
        \sum_{i=1,i\neq j}^n(\sum_{t=0}^T w_i \E_{s_t,a_t \sim {\pi}}[r_i(s_t,a_t)]) + \sum_{t=0}^T \hat{w}_j  \E_{s_t,a_t \sim {\pi}}[r_j(s_t,a_t)] + \sum_{t=0}^T \E_{s_t,a_t \sim {\pi}}[b(s_t,a_t)]
    \\ \leq  \sum_{i=1,i\neq j}^n(\sum_{t=0}^T w_i \E_{s_t,a_t \sim \hat{\pi}}[r_i(s_t,a_t)]) + \sum_{t=0}^T \hat{w}_j  \E_{s_t,a_t \sim \hat{\pi}}[r_j(s_t,a_t)] + \sum_{t=0}^T \E_{s_t,a_t \sim \hat{\pi}}[b(s_t,a_t)]
    \end{split}
\end{equation}

By assumption on $w,\hat{w_j}$

\begin{equation}
\label{eq:wminushatw}
    \begin{split}
       \sum_{i=1,i\neq j}^n(\sum_{t=0}^T w_i \E_{s_t,a_t \sim {\pi}}[r_i(s_t,a_t)]) + \sum_{t=0}^T {w}_j  \E_{s_t,a_t \sim {\pi}}[r_j(s_t,a_t)] + \sum_{t=0}^T \E_{s_t,a_t \sim {\pi}}[b(s_t,a_t)]
    \\ - \left(  \sum_{i=1,i\neq j}^n(\sum_{t=0}^T w_i \E_{s_t,a_t \sim {\pi}}[r_i(s_t,a_t)]) + \sum_{t=0}^T \hat{w}_j  \E_{s_t,a_t \sim {\pi}}[r_j(s_t,a_t)] + \sum_{t=0}^T \E_{s_t,a_t \sim {\pi}}[b(s_t,a_t)] \right) \\
    \leq |w_j-\hat{w}_j|\sum_{t=0}^T \E_{s_t,a_t \sim {\pi}}[r_j(s_t,a_t)] 
    \end{split}
\end{equation}

\begin{equation}
\label{eq:wminushatwpihat}
    \begin{split}
       \sum_{i=1,i\neq j}^n(\sum_{t=0}^T w_i \E_{s_t,a_t \sim {\hat{\pi}}}[r_i(s_t,a_t)]) + \sum_{t=0}^T \hat{w}_j  \E_{s_t,a_t \sim {\hat{\pi}}}[r_j(s_t,a_t)] + \sum_{t=0}^T \E_{s_t,a_t \sim {\hat{\pi}}}[b(s_t,a_t)] 
    \\ - \left(  \sum_{i=1,i\neq j}^n(\sum_{t=0}^T w_i \E_{s_t,a_t \sim {\hat{\pi}}}[r_i(s_t,a_t)]) + \sum_{t=0}^T {w}_j  \E_{s_t,a_t \sim {\hat{\pi}}}[r_j(s_t,a_t)] + \sum_{t=0}^T \E_{s_t,a_t \sim {\hat{\pi}}}[b(s_t,a_t)]   \right) \\
    \leq |w_j-\hat{w}_j|\sum_{t=0}^T \E_{s_t,a_t \sim {\hat{\pi}}}[r_j(s_t,a_t)] 
    \end{split}
\end{equation}

Now we have all the ingredients to show local Lipschitz continuity wrt to $w_i$. Let $K_{\bs{r}}= \norm{\bs{r}}_{\infty}T\geq\max(\sum_{t=0}^T \E_{s_t,a_t \sim {\hat{\pi}}}[r_j(s_t,a_t)],\sum_{t=0}^T \E_{s_t,a_t \sim {\pi}}[r_j(s_t,a_t)] )$. Then we have
\begin{align*}
     & \sum_{i=1,i\neq j}^n(\sum_{t=0}^T w_i \E_{s_t,a_t \sim {{\pi}}}[r_i(s_t,a_t)]) + \sum_{t=0}^T {w}_j  \E_{s_t,a_t \sim {{\pi}}}[r_j(s_t,a_t)] + \sum_{t=0}^T \E_{s_t,a_t \sim {{\pi}}}[b(s_t,a_t)]+ K_{\bs{r}} |w_j-\hat{w}_j| \\
     & \underset{Eq.\eqref{eq:piopt}}{\geq} \sum_{i=1,i\neq j}^n(\sum_{t=0}^T w_i \E_{s_t,a_t \sim {\hat{\pi}}}[r_i(s_t,a_t)]) + \sum_{t=0}^T {w}_j  \E_{s_t,a_t \sim {\hat{\pi}}}[r_j(s_t,a_t)] + \sum_{t=0}^T \E_{s_t,a_t \sim {\hat{\pi}}}[b(s_t,a_t)]+ K_{\bs{r}} |w_j-\hat{w}_j| \\
     & \underset{Eq.\eqref{eq:wminushatwpihat}}{\geq}  \sum_{i=1,i\neq j}^n(\sum_{t=0}^T w_i \E_{s_t,a_t \sim {\hat{\pi}}}[r_i(s_t,a_t)]) + \sum_{t=0}^T \hat{w}_j  \E_{s_t,a_t \sim {\hat{\pi}}}[r_j(s_t,a_t)] + \sum_{t=0}^T \E_{s_t,a_t \sim {\hat{\pi}}}[b(s_t,a_t)]\\
     & \underset{Eq.\eqref{eq:pihatopt}}{\geq}  \sum_{i=1,i\neq j}^n(\sum_{t=0}^T w_i \E_{s_t,a_t \sim {{\pi}}}[r_i(s_t,a_t)]) + \sum_{t=0}^T \hat{w}_j  \E_{s_t,a_t \sim {{\pi}}}[r_j(s_t,a_t)] + \sum_{t=0}^T \E_{s_t,a_t \sim {{\pi}}}[b(s_t,a_t)] \\
     & \underset{Eq.\eqref{eq:wminushatw}}{\geq} \sum_{i=1,i\neq j}^n(\sum_{t=0}^T w_i \E_{s_t,a_t \sim {{\pi}}}[r_i(s_t,a_t)]) + \sum_{t=0}^T {w}_j  \E_{s_t,a_t \sim {{\pi}}}[r_j(s_t,a_t)] + \sum_{t=0}^T \E_{s_t,a_t \sim {{\pi}}}[b(s_t,a_t)]- K_{\bs{r}} |w_j-\hat{w}_j| \\
\end{align*}
To conclude this proof, we note $$\E_{\bs{r}}[\asw(w,b,\bs{r})-\asw(\hat{w},b,\bs{r})]\leq T\E_{\bs{r}}[\norm{\bs{r}}_\infty]|w_j-\hat{w}_j|
$$
where $\E_{\bs{r}\sim f}[\norm{\bs{r}}_\infty]$ is finite by assumption.

\subsection{$g_k$ is locally Lipschitz continuous}
Unlike for $\asw$, we cannot argue that $g_k$ is Lipschitz continuous. Indeed changing the weights and boosts only slightly can cause a completely different policy to become optimal, leading to a discontinuous jump or drop in $\pi^*(w,b,\bs{r})$ and thus in general also in $g_k$. However, as we will show, when taking the expected value, these discontinuities get smoothed out, guaranteeing local Lipschitz continuity and thereby differentiability almost surely.

We will restrict to proving local Lipschitz continuity of $\E_{\bs{r}}\left[g_k(\pi^*(w,b,\bs{r}),\bs{r})\right]$ with respect to $b$, as the proof with respect to $w$ works similar.

In the proofs below we identify the reported reward functions of all agents with vectors $\bs{r}\in \R^d$, where $d={|S|\times|A|\times n}$. 

Our first observation is that all vectors $\bs{r}$ for which two policies $\pi_1,\pi_2$ give the same affine social welfare lie on a hyperplane. 
Indeed, let $\nu_i$ denote the induced state-action occupancy measure of policy $\pi_i$. Then
$$
\sum_{s,a} \nu_1(s,a) (w_i r_i(s,a) + b(s,a))=\sum_{s,a} \nu_2(s,a) (w_i r_i(s,a) + b(s,a))
$$
is equivalent to 
$$
\sum_{s,a,i}(\nu_1(s,a)-\nu_2(s,a))w_i r_i(s,a) = \sum_{s,a} (\nu_2(s,a)-\nu_1(s,a)) (b(s,a))
$$
Let $\boldsymbol{\nu}_1=(w_1\nu_1(s_1,a_1),w_2\nu_1(s_1,a_1),\dots,w_n \nu_1(s_1,a_1),w_1 \nu_1(s_2,a_1),\dots ,w_n \nu_1(s_{|S|},a_{|A|})$. Then the above is equivalent to the following hyperplane:
$$
\mathcal{H}_{12}(w,b)=\{\boldsymbol{r}:(\boldsymbol{\nu}_{1} - \boldsymbol{\nu}_{2}
)^T \boldsymbol{r} =\sum_{s,a} (\nu_{2}(s,a)-\nu_{1}(s,a)) (b(s,a))
\}$$

Let $b,\Tilde{b}\in \R^{|S|\times|A|}$, s.t. $\exists ! (s',a'): b(s',a') \neq \Tilde{b}(s',a')$. $\Tilde{b}$ gives us another hyperplane of equivalence between $\pi_1,\pi_2$

$$
\mathcal{H}_{12}(w,\Tilde{b})=\{\bs{r}:(\boldsymbol{\nu_1}-\bs{\nu_2})^T \boldsymbol{r} =\sum_{s,a} (\nu_2(s,a)-\nu_1(s,a)) (\tilde{b}(s,a))
\}$$

Note that $\mathcal{H}_{12}(w,\Tilde{b})$ and $\mathcal{H}_{12}(w,{b})$ are parallel with a distance $  \frac{|(\nu_2 (s',a') - \nu_1 (s',a'))(b(s',a')-\Tilde{b}(s',a'))|}{ \norm{ (\bs{\nu_1} - \bs{\nu_2})}}$.

Moreover, for any given parameters $(w,b)$ there are only $|\Pi|^2$ planes of equivalence, where $|\Pi|$ is the number of deterministic policies. 

When we change $b(s',a')$ to $\tilde{b}(s',a')$, then there can be rewards $\bs{r}$, for which $\pi^*(w,\tilde{b},\bs{r}) \neq \pi^*(w,b,\bs{r})$. For these rewards it follows that in general  $g_k(\pi^*(w,\Tilde{b},\bs{r}),\bs{r})$ is different from $g_k(\pi^*(w,b,\bs{r}),\bs{r})$. Using the hyperplanes defined above, we know the set of all $\bs{r}$, where $g_k(\pi^*(w,\Tilde{b},\bs{r}),\bs{r}) \neq g_k(\pi^*(w,b,\bs{r}),\bs{r})$ is contained in the set

\begin{equation*}
    U(w,b,\Tilde{b})= \bigcup\limits_{\pi_j,\pi_i \in \Pi}U(w,b,\Tilde{b})_{ij}
\end{equation*}
where 
\begin{align*}
U(w,b,\Tilde{b})_{ij}=\{\boldsymbol{r}: &|(\bs{\nu_i}-\bs{\nu_j})^T \boldsymbol{r} - \sum_{s,a} (\nu_j(s,a)-\nu_i(s,a)) (b(s,a))| \leq \\&|(\nu_j(s',a')-\nu_i(s',a')) (\tilde{b}(s',a')-b(s',a'))|\}
\end{align*}
are the polytopes induced by the hyperplanes $\mathcal{H}_{ij}(w,\tilde{b})$ and $\mathcal{H}_{ij}(w,{b})$.

With this in mind let us make a first naive analysis of the difference of the change in expectation of $g_k$, when changing $b$.

\begin{align}
    &|\E_{\boldsymbol{r}}[g_k(\pi^*(w,\Tilde{b},\bs{r}),\bs{r})] - \E_{\boldsymbol{r}}[g_k(\pi^*(w,{b},\bs{r}),\bs{r})]|\\
    &\leq E_{\boldsymbol{r}}[|g_k(\pi^*(w,\Tilde{b},\bs{r}),\bs{r}) -g_k(\pi^*(w,{b},\bs{r}),\bs{r})| ] \\
    &\leq  C \int_{U(w,b,\Tilde{b})} \norm{\bs{r}}_{\infty}f(\boldsymbol{r}) d\boldsymbol{r}\\
    &\leq C \sum_{\pi_i,\pi_j \in |\Pi|}   \int_{U(w,b,\Tilde{b})_{ij}} \norm{\bs{r}}_{\infty}f(\boldsymbol{r}) d\boldsymbol{r}\\
\end{align}
for some constant $C$ (since $g_k = \mathcal{O}(\norm{\bs{r}}_{\infty})$). 

We want to show that the above can be bounded by $L |b(s',a')-\tilde{b}(s',a')|$ for some $L$. For this we need to get a better understanding of $\int_{U(w,b,\Tilde{b})_{ij}} \norm{\bs{r}}_{\infty}f(\boldsymbol{r}) d\boldsymbol{r}$. 

For the sake of simplicity, we assume now the probability density $f$ has compact support on $R^d$,  i.e. there exists a $K$ such that $\forall \bs{r}: \norm{\bs{r}}_{2}\geq K \implies f(r)=0$.\footnote{This assumption is not necessary. As long as $f$ decays sufficiently quickly for large $\bs{r}$, the proof still goes through with some minor adjustments. However, we make the assumption here for streamlining our exposition and highlighting the parts of our proof, which are non-standard.} Since the hyperplanes  $\mathcal{H}_{ij}(w,\Tilde{b})$ and $\mathcal{H}_{ij}(w,b)$ have dimension $d-1$ and are parallel with distance $  \frac{|(\nu_2(s',a')-\nu_1(s',a'))(b(s',a')-\Tilde{b}(s',a'))|}{ \norm{ (\bs{\nu_1}-\bs{\nu_2})}}$, we can bound the integral by multiplying an upper bound of the volume of $U(w,b,\Tilde{b})_{ij}$ with the maximum possible value of $\norm{\bs{r}}_{\infty}f(\boldsymbol{r})$ .

\begin{align*}
    \int_{U(w,b,\Tilde{b})_{ij}} \norm{\bs{r}}_{\infty}f(\boldsymbol{r}) d\boldsymbol{r} &\leq (2K)^{d-1} \frac{|(\nu_2(s',a')-\nu_1(s',a'))(b(s',a')-\Tilde{b}(s',a'))|}{ \norm{ (\bs{\nu_1}-\bs{\nu_2})}} K \max_{\bs{r}}f(\bs{r})\\
    =&L|(b(s',a')-\Tilde{b}(s',a'))|
\end{align*}
for $L=(2K)^{d} \frac{|(\nu_2(s',a')-\nu_1(s',a'))|}{ \norm{ (\boldsymbol{\nu}_1-\boldsymbol{\nu}_2)}} \max_{\bs{r}}f(\bs{r})$, which proves Lipschitz continuity and thereby differentiability almost surely.

\section{Gradients of affine social welfare}
\label{app:grad_asw}

As outlined in Section \ref{sec:lpformulation}, if we can take the derivative inside the expected value, the implicit function theorem yields:
\[
\nabla_{w, b} \mathbb{E}_{\boldsymbol{r}}\left[\mathcal{L}\left(\pi^*, w, b\right)\right]=\mathbb{E}_{\boldsymbol{r}}\left[\nabla_2 \mathcal{L}\left(\pi^*, w, b\right)+\nabla_{w, b} \pi^*(w, b, \boldsymbol{r}) \nabla_1 \mathcal{L}\left(\pi^*, w, b\right)\right]
\]
Here we show that we can analytically compute the partial derivatives with respect to $w,b$ (keeping a policy $\pi$ fixed) of $\asw$, $\sw$ (and thus also for revenue), as well as for makespan, which is important given that we can use these to accelerate the gradient computation for our second approach where we take the gradient through the regularised LP.
We note that for $ \asw(w,b,r)$ the gradients can be computed in a straight-forward manner.\footnote{The analysis holds equivalently for $\asw_{-i}(w,b,r)$} 
For this, rewrite $\asw$ using the state-action occupancy measure $\nu_{\pi}$. \footnote{In general finite horizon MDPs this would be defined as $\nu_{\pi}(s,a)=\sum_{t=1}^T \mathbb{P}_{\pi}(s_t=s,a_t=a)$. In our experiments we always assume the states contain the current timestep such that this simplifies to $\nu_{\pi}(s,a)=\mathbb{P}_{\pi}(s_t=s,a_t=a)$. Note that the same analysis equivalently holds for $\asw^{(-i)}$}

$$
\asw(\pi,w,b,\bs{r})= \sum_{s,a} \nu_\pi(s,a) (\sum_{i=1}^n w_i r_i(s,a) + b(s,a))
$$
In this form taking the partial derivative is straightforward. We get the following
\begin{equation*}
\nabla_{w,b} \asw(w,b,r) = \nabla_{w,b}\sum_{s,a} \nu_{\pi}(s,a) (\sum_{i=1}^n w_i r_i(s,a) + b(s,a))
\end{equation*}
\begin{equation*}
	\nabla_{w_i} \asw(w,b,r) = \sum_{s,a} \nu_{\pi}(s,a)r_i(s,a)=\E_{\pi}[\sum_{t=0}^T  r_i(s,a)] 
 \end{equation*}
 \begin{equation*}
	\nabla_{b(s,a)} \asw(w,b,r) = \nu_{\pi}(s,a)
\end{equation*}
Boosting a state increases $\asw$ in relation to how often the state is visited under the optimal policy. This is easy to compute. In fact, $\nu_{\pi}$ is the solution to the linear programming formulation of the MDP. Similarly increasing the weight of an agent changes $\asw$ in proportion to the expected sum of rewards the agent gets. 

For $\sw$ and makespan the analysis is even simpler, since the partial derivative with respect to $w,b$, keeping a policy fixed is just 0. As both only depend on these parameters indirectly through the policy.

\section{Proof of Theorem \ref{thm:regconv}}
\label{app:reg_rev}

\begin{proof}
    Fix $w,b,\bs{r}$ and denote by $L_{\alpha}(w,b,\bs{r}) ,L(w,b,\bs{r})$ the optimal regularised and unregularised loss for this specific choice of variables. We first show 
    \[
    \lim_{\alpha\rightarrow 0} L_{\alpha}(w,b,\bs{r}) = L(w,b,\bs{r})
    \]
    
    Let $\nu^*,\nu^{\alpha}$ be the corresponding state-action measures to $\pi^*,\pi^\alpha$---the optimal policies for the unregularised and regularised LP. Using Corollary 9 of \citet{Weed2018Explicit}, for sufficiently small $\alpha$, we get:
    
    \begin{align*}
        & \left |L(w,b,\bs{r})-L_{\alpha}(w,b,\bs{r})\right |\\
        &= \left |\sum_{s,a} (\nu^*(s,a)-\nu^{\alpha}(s,a))\left(\bs{q}\right)\right |\\
        &= \left |\langle\nu^*-\nu^{\alpha}, \bs{q} \rangle \right |\\
        &\leq \norm{\nu^*-\nu^{\alpha}}_{1} \norm{q}_{\infty}\\
        &\leq 2 R_1 \exp\left(-\frac{\Delta(\bs{r})}{\alpha R_1} + \frac{R_1+R_H}{R_1}\right) (C\norm{\bs{r}}_{\infty})\\
    \end{align*}
    where $C$ is some constant, $R_1$ is the $l_1$ radius of all feasible solutions, $R_H$ is the entropic radius, and $\Delta$ the suboptimality gap~\cite{Weed2018Explicit}. This shows pointwise convergence of $L_{\alpha}(w,b,\bs{r})$.
    Note further that for any policy $\pi$ and any $w,b$, $\mathcal{L}_{\pi}(w,b,\bs{r})$ can be bounded by $(C\norm{\bs{r}}_{\infty})$ for some $C$, which by assumption is integrable. It follows by the Domianted Convergence Theorem that 
    \[\forall w,b:\quad \lim_{\alpha\rightarrow 0}\E\left[\mathcal{L}_{\alpha}(w,b,\bs{r})\right]  = \E\left[\mathcal{L}(w,b,\bs{r})\right]\]
 \end{proof}

\section{Linear Program of regularized MDP}
\label{app:reglp}

Below we give the regularized form of the MDP linear program -- it is now a convex (exponential cone) program.

\begin{equation}
\begin{split}
&	 \max \sum_{s \in S, a \in A} \left( \sum_i w_i r_i(s,a) 
	+ b(s,a) \right) \nu(s, a) + \\
&\quad  \quad \alpha \sum_{s \in S, a \in A} \nu(s,a) \log \nu(s,a) \quad  \text{ s.t.  }\\
	&\sum_{a \in A} \nu(s, a) = \sum_{s',a'} P(s|s', a') \nu(s', a')  + \mu_0(s) \quad \forall s \in S \\
	& \nu(s, a) \geq 0 \quad \forall s \in S, a \in A
 \end{split}
 \label{eq:smoothlp}
\end{equation}

\section{Computational Details and Hyperparameters}
\label{app:hparam}

The grid search experiments and all experiments in the gridworld environment were run concurrently on a server with 256 cores and 250GB of RAM, while restricting the number of threads in MOSEK to 4.
Other experiments were run on cluster nodes with 4 cores and 1GB or 2GB of RAM per core, except that the task scheduling regularized LP jobs with 3 agents were run with 16 cores and 64GB memory. During development we experimented with up to 1000 sampled valuation profiles, up to 2000 perturbations, learning rates ranging from 0.001 to 0.1, and regularization strengths up to 0.1; we quickly settled on the chosen hyperparameters and did not do a more exhaustive search due to computational constraints. For distributions where the bidder valuations are symmetric, we optimize only boosts, fixing the weights to 1.

\end{document}